\begin{document}
\title{Parallel-batched Interpolation Search Tree}

%
%
%
\author{Vitaly Aksenov\inst{1} \and
Ilya Kokorin\inst{1} \and
Alena Martsenyuk}
%
%
\institute{ITMO University, Russia }
%
\maketitle              
\begin{abstract}
A sorted set (or map) is one of the most used data types in computer science. In addition to standard set operations, like \texttt{Insert}, \texttt{Remove}, and \texttt{Contains}, it can provide set-set operations such as \texttt{Union}, \texttt{Intersection}, and \texttt{Difference}. Each of these set-set operations is equivalent to some batched operation: the data structure should be able to execute \texttt{Insert}, \texttt{Remove}, and \texttt{Contains} on a batch of keys. It is obvious that we want these ``large'' operations to be parallelized. These sets are usually implemented with the trees of logarithmic height, such as 2-3 trees, treaps, AVL trees, red-black trees, etc. Until now, little attention was devoted to data structures that work asymptotically better under several restrictions on the stored data. In this work, we parallelize Interpolation Search Tree which is expected to serve requests from a \emph{smooth} distribution in doubly-logarithmic time. Our data structure of size $n$ performs a batch of $m$ operations in $O(m \log\log n)$ work and poly-log span.

\keywords{Parallel Programming  \and Data Structures \and Parallel-Batched Data Structures.}
\end{abstract}

\section{Introduction}

A \emph{Sorted set} is one of the most ubiquitous \emph{Abstract Data Types} in Computer Science, supporting \texttt{Insert}, \texttt{Remove}, and \texttt{Contains} operations among many others. The sorted set can be implemented using different data structures: to name a few, skip-lists~\cite{pugh1990skip}, red-black trees~\cite{guibas1978dichromatic}, splay trees~\cite{sleator1985self}, or B-trees~\cite{graefe2011modern,comer1979ubiquitous}.

Since nowadays most of the processors have multiple cores, we are interested in parallelizing these data structures. There are two ways to do that: write a concurrent version of a data structure or allow one to execute a batch of operations in parallel. The first approach is typically very hard to implement correctly and efficiently due to different problems with synchronization. Thus, in this work we are interested in the second approach: \emph{parallel-batched data structures}.

Several parallel-batched data structures implementing a sorted set are presented: e.g., 2-3 trees~\cite{paul1983parallel}, red-black trees~\cite{park2001parallel}, treaps~\cite{blelloch1998fast}, (a, b) trees~\cite{akhremtsev2016fast}, AVL-trees~\cite{medidi1998parallel}, and generic joinable binary search trees~\cite{blelloch2016just,sun2018pam}.

Although many parallel-batched trees were presented, we definitely lack implementations that can execute separate queries in $o(\log n)$ time (where $n$ is the number of keys in the set) under some reasonable assumptions. However, there exist several sequential data structures with this property. One example is \emph{Interpolation Search Tree}, or \emph{IST}.

Despite the fact that concurrent IST is already presented~\cite{brown2020non,prokopec2020analysis} we still lack parallel-batched version of the IST: indeed, it differs from the concurrent version because concurrent IST allows many processes to execute scalar requests simultaneously, while parallel-batched IST utilizes multiprocessing to parallelize large non-scalar requests.
In this work, we design and test \emph{Parallel-batched Interpolation Search Tree}.

The work is structured as follows: in Section~\ref{preliminaries-section} we describe the important preliminaries;
in Section~\ref{IST-section} we present the original Interpolation Search Tree; in Sections~\ref{parallel-contains-section},~\ref{parallel-insert}, and~\ref{parallel-remove} we present the parallel-batched contains, insert and remove algorithms; in Section~\ref{parallel-rebuilding-section} we present a parallelizable method to keep the IST balanced; in Section~\ref{theoretical-section} we present a theoretical analysis; in Section~\ref{experiments-section} we discuss the implementation and present experimental results; we conclude in Section~\ref{conclusion-section}. 

\section{Preliminaries}
\label{preliminaries-section}
\subsection{Sorted set}

Sorted set stores a set of keys of the same comparable type \texttt{K} and allows one to execute, among others, the following operations:

\begin{itemize}
    \item \texttt{Set.Contains(key)}~--- returns \texttt{true} if \texttt{key} belongs to the set,  \texttt{false} otherwise.
    
    \item \texttt{Set.Insert(key)}~--- if \texttt{key} does not exist in the set, adds \texttt{key} to the set and returns \texttt{true}. Otherwise, it leaves the set unmodified and return \texttt{false}.
    
    \item \texttt{Set.Remove(key)}~--- if \texttt{key} exists in the set, removes \texttt{key} from the set and returns \texttt{true}. Otherwise, it leaves the set unmodified and return \texttt{false}.
    
    
\end{itemize}

\subsection{Parallel-batched data structures}
\label{parallel-batched-section}
\begin{definition}
\normalfont Consider a data structure $D$ storing a set of keys and an operation $Op$. If $Op$ involves only one key (e.g., it checks whether a single key exists in the set, or inserts a single key into the set) it is called a \emph{scalar operation}. Otherwise (i.e., if $Op$ involves multiple keys) it is called a \emph{batched operation}.

A data structure $D$ that supports at least one \emph{batched operation} is called a \emph{batched data structure}.
\end{definition}

We want a sorted set to implement the following batched operations:
\begin{itemize}
    \item \texttt{Set.ContainsBatched(keys[])}~--- the operation takes an array of keys of size $m$ and returns an array \texttt{Result} of the same size. For each $i \in [0; m)$, \texttt{Result[i]} is true if \texttt{keys[i]} exist in the set, and false otherwise.
    
    \item \texttt{Set.InsertBatched(keys[])}~--- the operation takes an array of keys of size $m$. For each $i \in [0; m)$, if \texttt{keys[i]} does not exist in the set, the operation adds it to the set.
    
    \item \texttt{Set.RemoveBatched(keys[])}~--- the operation takes an array of keys of size $m$. For each $i \in [0; m)$, if \texttt{keys[i]} exists in the set, the operation removes it from the set.
\end{itemize}

Note, that: 

\begin{itemize}
    \item \texttt{InsertBatched} calculates the union of two sets. Indeed, $\texttt{A} \leftarrow \texttt{A} \cup \texttt{B}$ can be written as \texttt{A.InsertBatched(B)};

    \item \texttt{RemoveBatched} calculates the difference of two sets. Indeed, $\texttt{A} \leftarrow \texttt{A} \setminus \texttt{B}$ can be written as \texttt{A.RemoveBatched(B)};

    \item \texttt{ContainsBatched} calculates the intersection of two sets. Indeed, $\texttt{A} \cap \texttt{B}$ can be calculated by identifying all keys, belonging in both \texttt{A} and \texttt{B}. As follows from the \texttt{ContainsBatched} definition, \texttt{A.ContainsBatched(B)} allows us to identify all such keys;
\end{itemize}

We can employ parallel programming techniques (e.g., \emph{fork-join parallelism}~\cite{blumofe1996cilk,lea2000java}) to execute batched operations faster. 

\begin{definition}
\normalfont A batched data structure $D$ that uses parallel programming to speed up batched operations is called a \emph{parallel-batched data structure}.
\end{definition}

\subsection{Time complexity model}
\label{work-span-section}

In our work, we assume the standard \emph{work-span} complexity model~\cite{acar2019algorithms} for \emph{fork-join} computations. We model each computation as a \emph{directed acyclic graph}, where nodes represent operations and edges represent dependencies between them: if there exists a path $v \rightarrow a_1 \rightarrow a_2 \rightarrow \ldots \rightarrow a_n \rightarrow u$ from operation \texttt{v} to operation \texttt{u}, then operation \texttt{v} must be executed before operation \texttt{u}. If neither path $v \mathrel{\leadsto} u$ nor path $u \mathrel{\leadsto} v$ exists, operations \texttt{u} and \texttt{v} can be executed in any order, even in parallel (see Fig~\ref{dag-pic}).  Moreover, the \emph{directed acyclic graph}, modeling the parallel computation, has exactly one \emph{source node} (i.e., the node with zero incoming edges) and exactly one \emph{sink node} (i.e., the node with zero outcoming edges). The only \emph{source node} corresponds to the start of the execution, while the only \emph{sink node} corresponds to the end of the execution.
Some nodes have two \emph{outcoming edges}~--- such operations spawn two parallel tasks and are called \emph{fork operations}.
Some nodes have two \emph{incoming edges}~--- such operations wait for two corresponding parallel tasks to complete and are called \emph{join operations}.

\begin{figure}[H]
  \centering
  \caption{Example of an \emph{directed acyclic graph}, representing a parallel execution\\} 
  \label{dag-pic}
  \includegraphics[width=\linewidth]{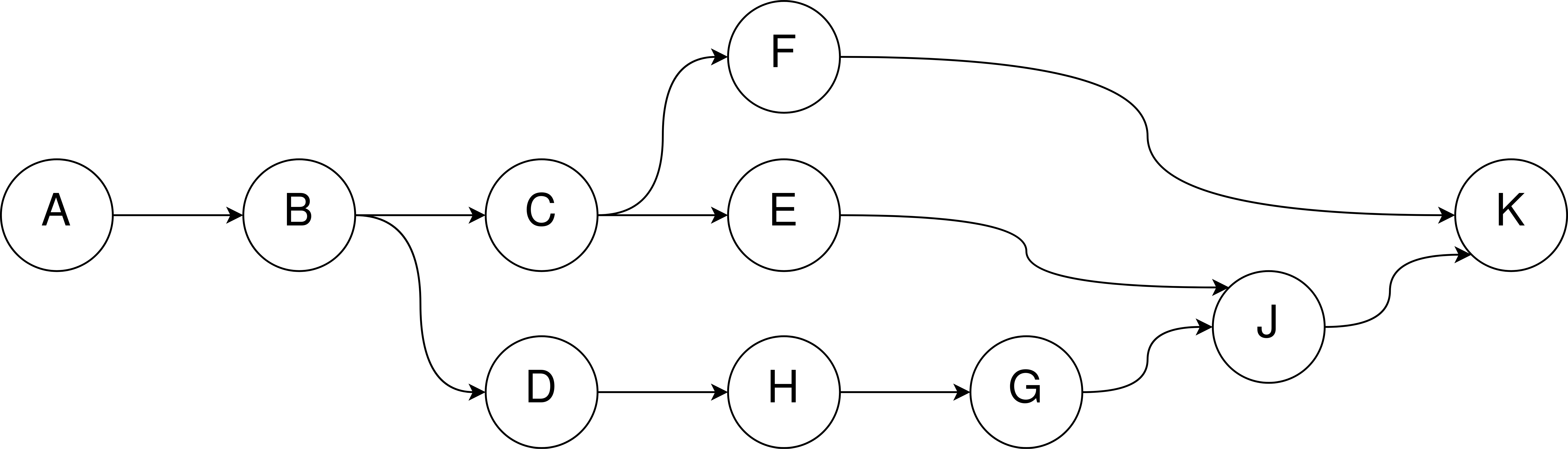}
\end{figure}

For example, in Fig.~\ref{dag-pic} operation \texttt{A} must be executed before operation \texttt{H} and operation \texttt{C} must be executed before operation \texttt{J}, while operations \texttt{E} and \texttt{G} can be executed in any order, even in parallel. Node \texttt{A} is the \emph{source node}, denoting the beginning of the execution, while node \texttt{K} is the \emph{sink node}, denoting the end of the execution. Nodes \texttt{B} and \texttt{C} correspond to \emph{fork operations}, while nodes \texttt{J} and \texttt{K} correspond to \emph{join operations}

Considering the execution graph of the algorithm, our target complexities are: 
\begin{itemize}
    \item \emph{work} denotes the number of nodes in the graph, i.e., the total number of executed operations. The graph in Fig.~\ref{dag-pic} has \emph{work} of value $10$.

    \item \emph{span} denotes the number of nodes on the longest path from the \emph{source node} to the \emph{sink node}, i.e., the length of the critical path in the graph. The graph in Fig.~\ref{dag-pic} has \emph{span} of value $7$, since the longest path from the \emph{source node} (node \texttt{A}) to the \emph{sink node} (node \texttt{K}) is the path $A \rightarrow B \rightarrow D \rightarrow H \rightarrow G \rightarrow J \rightarrow K$.
\end{itemize}

\subsection{Standard parallel primitives}
\label{primitives-section}

In this work, we use several standard parallel primitives. Now, we give their descriptions. Their implementations are provided, for example, in~\cite{jaja1992introduction}.

\textbf{Parallel loop. } It executes a loop body for \texttt{n} index values (from \texttt{left} to \texttt{right}, exclusive) in parallel. This operation costs $O(n)$ work and $O(\log n)$ span given that the loop body has time complexity $O(1)$. 

\begin{lstlisting}[escapeinside={(*}{*)},captionpos=t,numbers=none]
pfor i in left..right:
   loop_body(i)
\end{lstlisting}

\textbf{Scan. } \texttt{Result := Scan(Arr)} calculates \emph{exclusive} prefix sums of array \texttt{Arr} such that  \begin{equation*}
            Result[i] = 
             \begin{cases}
               0 & i = 0\\
               \sum\limits_{j = 0}^{i - 1} Arr[j] & i > 0
             \end{cases}
            \end{equation*}
\texttt{Scan} has $O(n)$ work and $O(\log n)$ span given that the sum of two values can be calculated in $O(1)$.
    
\textbf{Filter. } \texttt{Filter(Arr, predicate)} returns an array, consisting of elements of the given array \texttt{Arr} satisfying \texttt{predicate} keeping the order (e.g., \texttt{Filter([1, 3, 8, 6, 7, 2], is\_even)} returns \texttt{[8, 6, 2]}). \texttt{Filter} has $O(n)$ work and $O(\log n)$ span given that \texttt{predicate} has time complexity $O(1)$.
    
\textbf{Merge. } \texttt{Merge(A, B)} merges two sorted arrays \texttt{A} and \texttt{B} keeping the result sorted. \texttt{Merge} has $O(|A| + |B|)$ work and $O(\log^2 (|A| + |B|))$ span.

\textbf{Difference.} \texttt{Difference(A, B)} takes two sorted arrays \texttt{A} and \texttt{B} and returns all elements from \texttt{A} that are not present in \texttt{B}, in sorted order: e.g., \texttt{Difference([2, 4, 5, 7, 9], [2, 5, 9]) = [4, 7]}. \texttt{Difference} takes $O(|A| + |B|)$ work and $O(\log^2 (|A| + |B|))$ span.
    
\textbf{Rank. } Given that \texttt{A} is a sorted array and \texttt{x} is a value, we denote \texttt{ElemRank(A, x)} = $\vert \{e \in A \vert e \leq x\} \vert$ as the number of elements in \texttt{A} that are less than or equal to \texttt{x} (note, that it is also the insertion position of \texttt{x} in \texttt{A} so that \texttt{A} remains sorted after the insertion). For example, \texttt{ElemRank([1, 3, 5, 7], 2) = 1}, \texttt{ElemRank([1, 3, 5, 7], 5) = 3}, \texttt{ElemRank([1, 3, 5, 7], -1) = 0}.
Given that \texttt{A} and \texttt{B} are sorted arrays, we denote \texttt{Rank(A, B) = [$\texttt{r}_{\texttt{0}}$, $\texttt{r}_{\texttt{1}}$, $\ldots$, $\texttt{r}_{\vert \texttt{B} \vert - 1}$]}, where \texttt{$\texttt{r}_{\texttt{i}}$ = ElemRank(A, B[i])}. \texttt{Rank} operation can be computed in $O(|A| + |B|)$ work and $O(\log^2 (|A| + |B|))$ span.

\section{Interpolation Search Tree}
\label{IST-section}

\subsection{Interpolation Search Tree Definition}
\label{IST-structure-section}

Interpolation Search Tree (IST) is a multiway internal search tree proposed in~\cite{mehlhorn1993dynamic}.
IST for a set of keys $x_0 < x_1 < \ldots < x_{n - 1}$ can be either \emph{leaf} or \emph{non-leaf}. 

\begin{definition}
Leaf IST with a set of keys $x_0 < x_1 < \ldots < x_{n-1}$ consists of array \texttt{Rep} with $Rep[i] = x_i$, i.e., it keeps all the keys in a sorted array \texttt{Rep}.
\end{definition}

\begin{definition}
\emph{Non-leaf} IST with a set of keys $x_0 < x_1 < \ldots < x_{n - 1}$ consists of (Fig.~\ref{IST-array-pic} and~\ref{IST-from-array-pic}):
\begin{itemize}
    \item An array \texttt{Rep} storing an ordered subset of keys $x_{i_0}, x_{i_1}, \ldots x_{i_{k - 1}}$ (i.e., $0 \leq i_0 < i_1 < \ldots < i_{k - 1} < n$).
    
    \item Child ISTs $C_0, C_1 \ldots C_{k}$ (each can be either leaf or non-leaf), such that:

    \begin{itemize}
        \item $C_0$ is an IST storing a subset of keys $x_0, x_1 \ldots x_{i_0 - 1}$;
        \item for $1 \leq j \leq k - 1$, $C_j$ is an IST storing a subset of keys $x_{i_{j - 1} + 1}, \ldots x_{i_j - 1}$;
        \item $C_{k}$ is an IST storing a subset of keys $x_{i_{k - 1} + 1}, \ldots x_{n - 1}$;
    \end{itemize}
\end{itemize}
\end{definition}

\begin{figure}[H]
  \caption{Example of a non-leaf IST. Here $n = 13, k = 3, i_0 = 3, i_1 = 6, i_2 = 10$. Thus, $Rep[0] = x_3, Rep[1] = x_6, Rep[2] = x_{10}$. $C_0$ stores keys $x_0 \ldots x_2$, $C_2$ stores keys $x_4 \ldots x_5$, $C_3$ stores keys $x_7 \ldots x_{9}$, $C_4$ stores keys $x_{11} \ldots x_{12}$.}
  \label{IST-array-pic}
  \includegraphics[width=\linewidth]{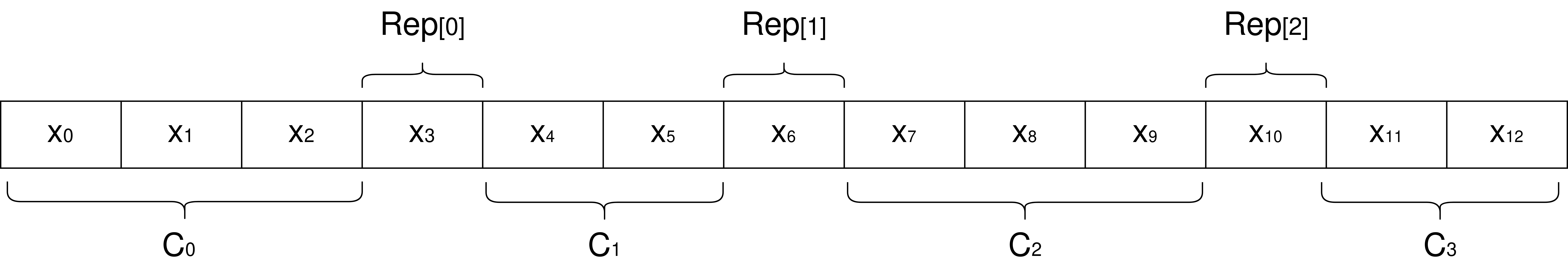}
\end{figure}

\begin{figure}[H]
  \centering
  \caption{Example of an IST built on array in Fig.~\ref{IST-array-pic}. Two nodes (root and its leftmost child) are non-leaf ISTs, while other nodes are leaf ISTs.}
  \label{IST-from-array-pic}
  \includegraphics[width=\linewidth]{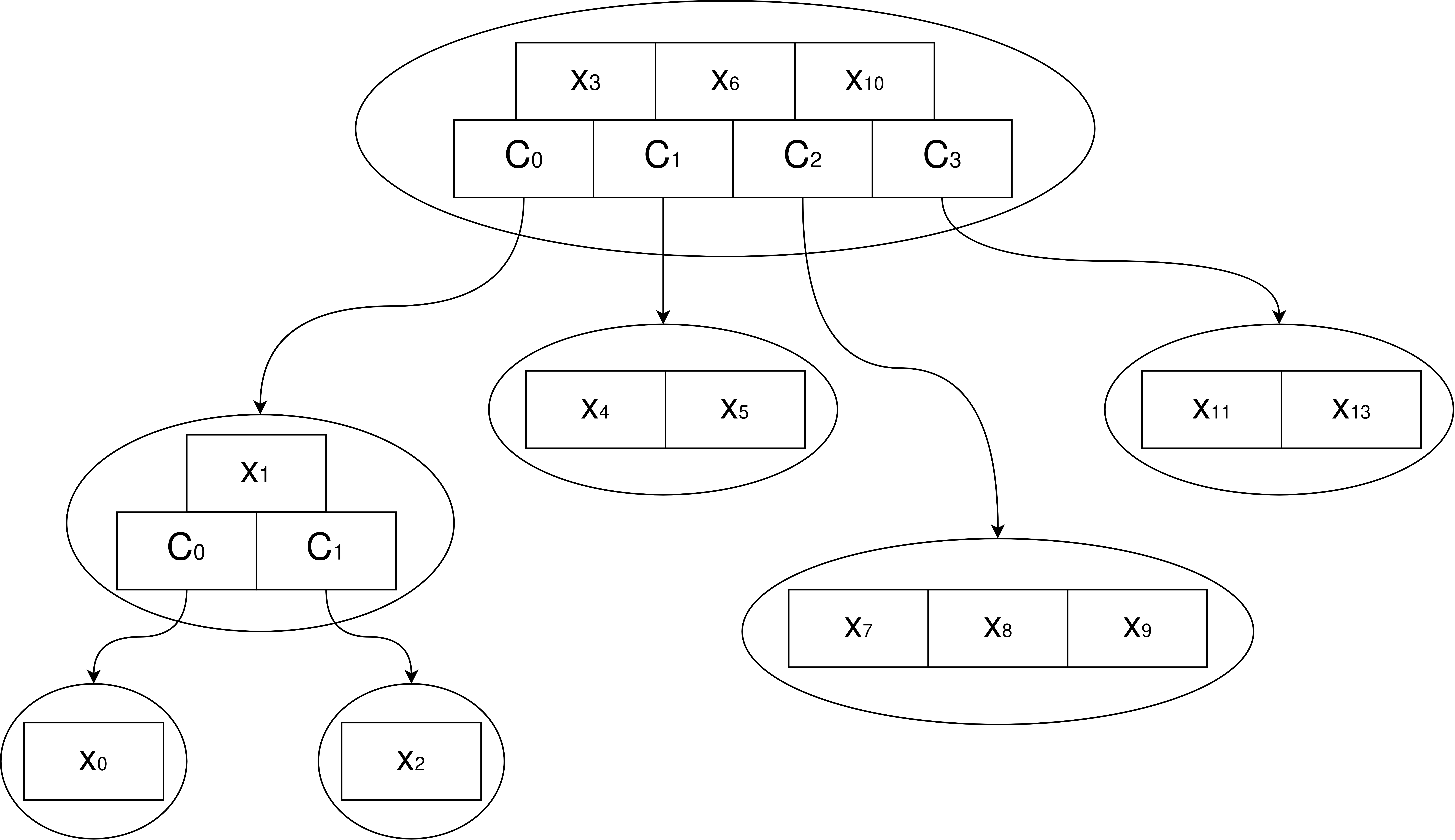}
\end{figure}

Any \emph{non-leaf} IST has the following properties: 

\begin{itemize}
    \item all keys less than $Rep[0]$ are located in $C_0$;
    \item all keys in between $Rep[j-1]$ and $Rep[j]$ are located in $C_j$;
    \item all keys greater than $Rep[k - 1]$ are located in $C_{k}$;
\end{itemize}

\subsection{Interpolation search and the lightweight index}
\label{interpolation-search-section}

We can optimize operations on ISTs with numeric keys, by leveraging the \emph{interpolation search technique}~\cite{peterson1957addressing,mehlhorn1993dynamic,willard1985searching}.
Each node of an IST (both leaf and non-leaf ones) has an \emph{approximate index} that can point to some place in the \texttt{Rep} array \emph{close} to the position of the key being searched. The structure of a non-leaf IST with an index is shown in Fig.~\ref{IST-pic}.

\begin{figure}[H]
  \centering
  \caption{Non-leaf IST contains: (1) \texttt{Rep} array with stored keys; (2) an array of pointers to child ISTs \texttt{C}; (3) an index, allowing for fast lookups of keys in the \texttt{Rep} array.} 
  \label{IST-pic}
  \includegraphics[width=\linewidth]{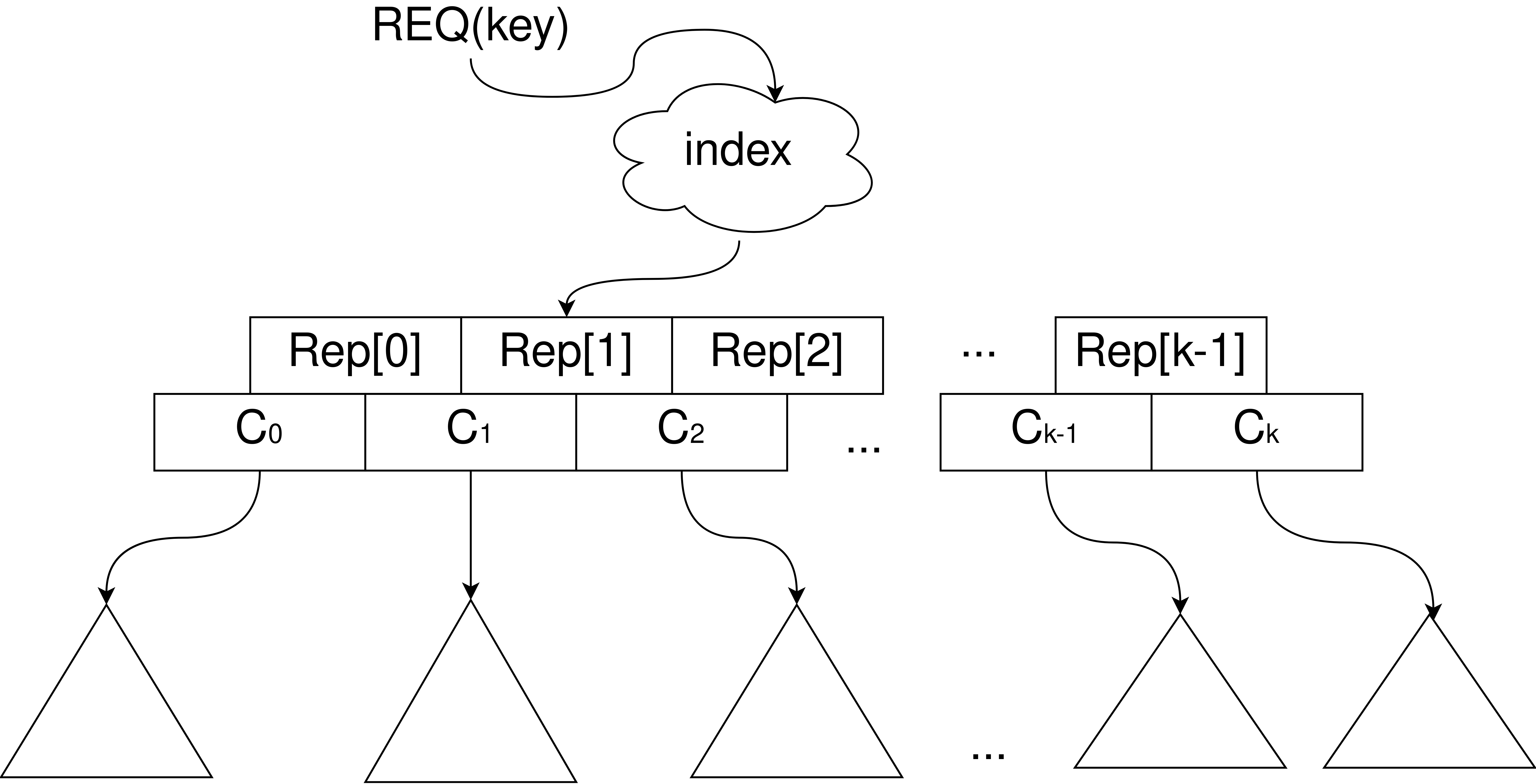}
\end{figure}

The original IST paper~\cite{mehlhorn1993dynamic} proposes to use an array \texttt{ID} of size \texttt{$m \in \Theta(n^\varepsilon)$} with some $\varepsilon \in [\frac{1}{2}; 1)$ as an index. \texttt{ID} array is defined the following way: let \texttt{a} and \texttt{b} be reals such that \texttt{a < b}. In that case,
\texttt{ID[i] = j} iff $Rep[j] < a + i \cdot \frac{b - a}{m} \leq Rep[j + 1]$. As stated in~\cite{mehlhorn1993dynamic}, \texttt{ID[$\lfloor \frac{x - a}{b - a} \cdot m \rfloor$]} is the approximate position of \texttt{x $\in (a; b)$} in \texttt{Rep}.

After finding the approximate location of \texttt{x} in \texttt{Rep}, we can find its exact location by using the linear search, as described in~\cite{mehlhorn1993dynamic}. Let us denote \texttt{i := ID[$\lfloor \frac{x - a}{b - a} \cdot m \rfloor$]}.

\begin{itemize}
    \item if \texttt{Rep[i] = x}, we conclude that we have found the exact position of \texttt{x};

    \item if \texttt{Rep[i] < x}, we conclude that \texttt{x} can only be found to the right of the \texttt{i}-th element in \texttt{Rep} array (since the \texttt{Rep} array is sorted). Therefore, we start incrementing \texttt{i} until we either: (1) find \texttt{x}; (2) find the element that is strictly greater than \texttt{x} or reach the end of \texttt{Rep} array~--- in that case we conclude that \texttt{x} is not contained in \texttt{Rep} array (Fig.~\ref{search-right-pic});

    \item if \texttt{Rep[i] > x}, we conclude that \texttt{x} can only be found to the left of the \texttt{i}-th element in \texttt{Rep} array (since the \texttt{Rep} array is sorted). Therefore, we start decrementing \texttt{i} until we either: (1) find \texttt{x}; (2) find the element that is strictly less than \texttt{x} or reach the beginning of \texttt{Rep} array ~--- in that case we conclude that \texttt{x} is not contained in \texttt{Rep} array (Fig.~\ref{search-left-pic}).
\end{itemize}

\begin{figure}[H]
    \caption{Determining the exact location of the key given the approximate location}
     \centering
    \begin{subfigure}[b]{0.45\linewidth}
          \centering
          \caption{Searching for the key on the right to the approximate position.}
          \includegraphics[width=\linewidth]{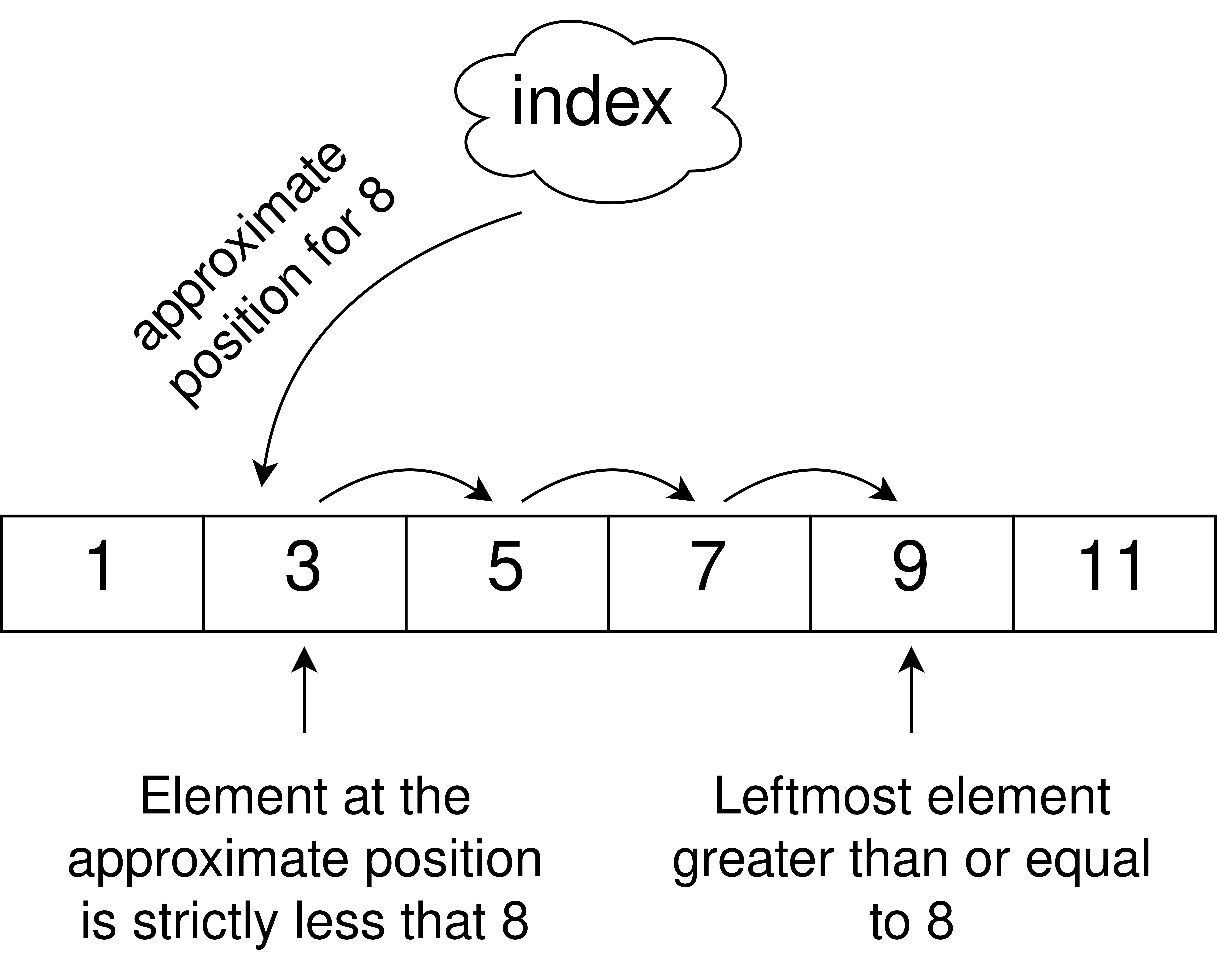}
          \label{search-right-pic}
     \end{subfigure}
    \hfill
     \begin{subfigure}[b]{0.45\linewidth}
          \centering
          \caption{Searching for the key on the left to the approximate position.}
          \includegraphics[width=\linewidth]{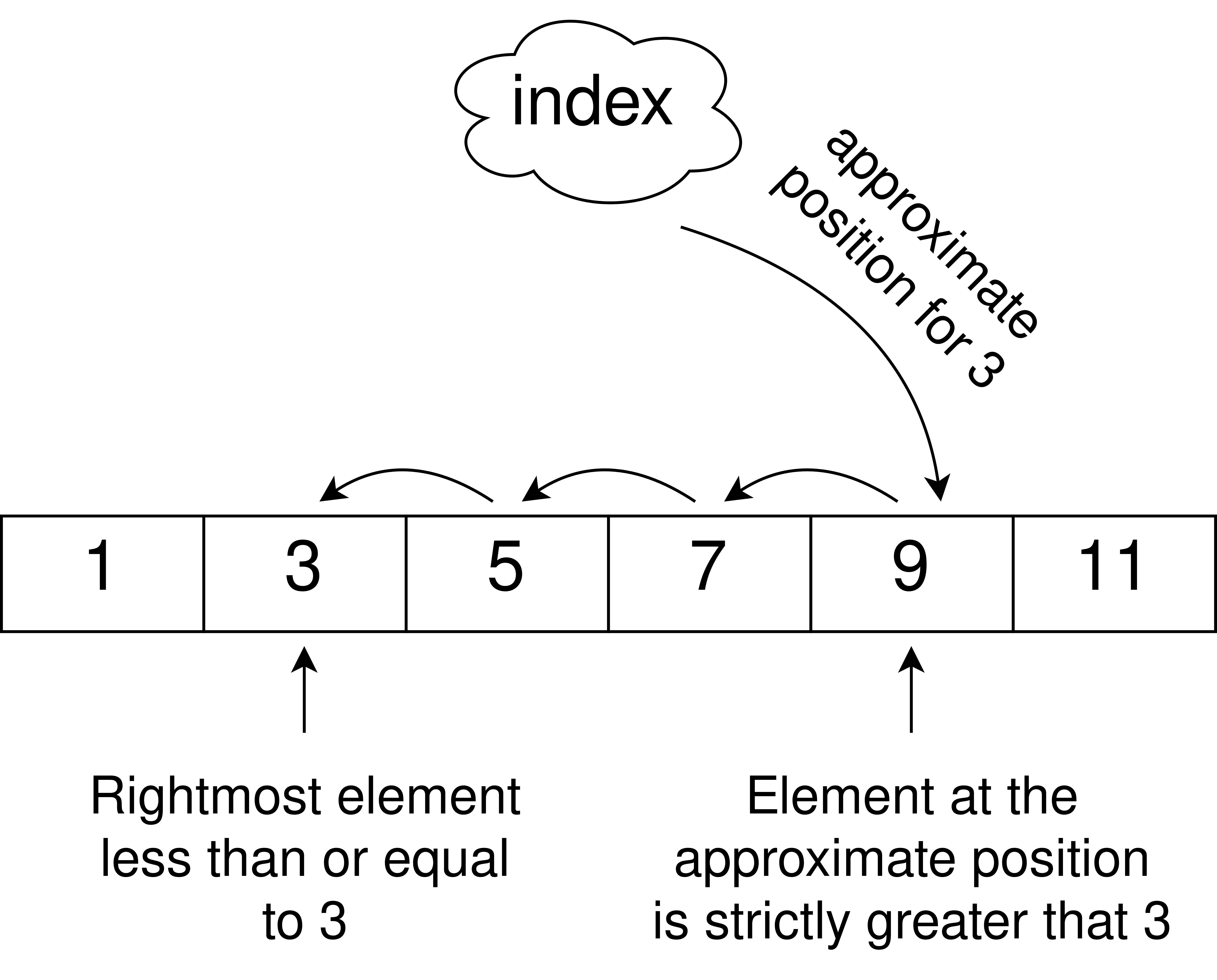}
          \label{search-left-pic}
     \end{subfigure}
    \label{search-linear-pic}
\end{figure}

Note, we can use more complex techniques instead of the linear search, e.g., exponential search~\cite{bentley1976almost}. However, they are often unnecessary, since the index usually provides an approximation good enough to finish the search only in a couple of operations. We can also use a machine learning model as an approximate index structure~\cite{kraska2018case}.

\subsection{Search in IST}
\label{search-sequential-section}

Suppose we want to find a \texttt{key} in an IST. The search algorithm is iterative: on each iteration we look for the \texttt{key} in a subtree of some node \texttt{v}. To look for the key in the whole IST we begin the algorithm with setting \texttt{v := IST.Root}, since the IST is the subtree of its root.

To find \texttt{key} in \texttt{v} subtree, we do the following (here \texttt{k} is the length of \texttt{v.Rep}):

\begin{enumerate}
    \item If \texttt{v} is empty, we conclude that \texttt{key} is not there;

    \item If \texttt{key} is found in \texttt{v.Rep} array, then we have found the key;

    \item If \texttt{key < v.Rep[0]}, the \texttt{key} can be found only in \texttt{v.C[0]} subtree (as explained in Section~\ref{IST-section}, all such keys are stored in the leftmost child subtree).
    Thus, we set \texttt{v $\leftarrow$ v.C[0]} and continue our search in the leftmost child;

    \item If \texttt{key > v.Rep[k - 1]}, the \texttt{key} can be found only in \texttt{v.C[k]} subtree (as explained in Section~\ref{IST-section}, all such keys are stored in the rightmost child subtree).
    Thus, we set \texttt{v $\leftarrow$ v.C[k]} and continue our search in the rightmost child;

    \item Otherwise, we find \texttt{j} such that \texttt{v.Rep[j - 1] < key < v.Rep[j]}. As stated in Section~\ref{IST-section} \texttt{key} can be found only in \texttt{v.C[j]}. Thus, we set \texttt{v $\leftarrow$ v.C[j]} and continue our search in the \texttt{j}-th child.
\end{enumerate}

We can implement the aforementioned algorithm the following way (Listing~\ref{contains-sequential-listing}):

\begin{lstlisting}[caption={An algorithm to search a key in an IST},escapeinside={(*}{*)},captionpos=t,numbers=none,label={contains-sequential-listing}]
fun IST.Contains(key):
    v := IST.Root
    while true:
        if v = nil: return false
        k := (*$\vert$*) v.Rep (*$\vert$*)
        if key < v.Rep[0]: v (*$\leftarrow$*) v.C[0]
        elif key > v.Rep[k - 1]: v (*$\leftarrow$*) v.C[k]
        else:
            j := interpolation_search(v.Rep, key)
            if key = v.Rep[j]: return true
            v (*$\leftarrow$*) v.C[j]
\end{lstlisting}

\subsection{Executing update operations and maintaining balance}
\label{insert-balancing-sequenctial-section}

The algorithm for inserting a key into IST is very similar to the search algorithm above. To execute \texttt{IST.Insert(key)} we do the following (Fig.~\ref{ist-single-insert-pic}):

\begin{enumerate}
    \item Initialize \texttt{v := IST.Root}: we insert the desired key in the subtree of \texttt{v}, which is initially the whole IST;
    
    \item For the current node \texttt{v}, if \texttt{key} appears in \texttt{v.Rep} array, we finish the operation~--- the key already exists.
        
    \item If \texttt{v} is a leaf and \texttt{key} does not exists in \texttt{v.Rep}, insert \texttt{key} into \texttt{v.Rep} keeping it sorted (we can find the insertion position for the new key as described in Section~\ref{interpolation-search-section});
        
    \item If \texttt{v} is an inner node and \texttt{key} does not exists in \texttt{v.Rep}, determine in which child the insertion should continue (we do it the same way as in Section~\ref{search-sequential-section}), set \texttt{v $\leftarrow$ v.C[next\_child\_idx]} and go to step~(2).
\end{enumerate}

\begin{figure}[H]
  \centering
  \caption{Insert \texttt{15}: proceed from the root to the second child since $2 < 15 < 50$
  and then to the first child since $15 < 20$} 
  \label{ist-single-insert-pic}
  \includegraphics[width=\linewidth]{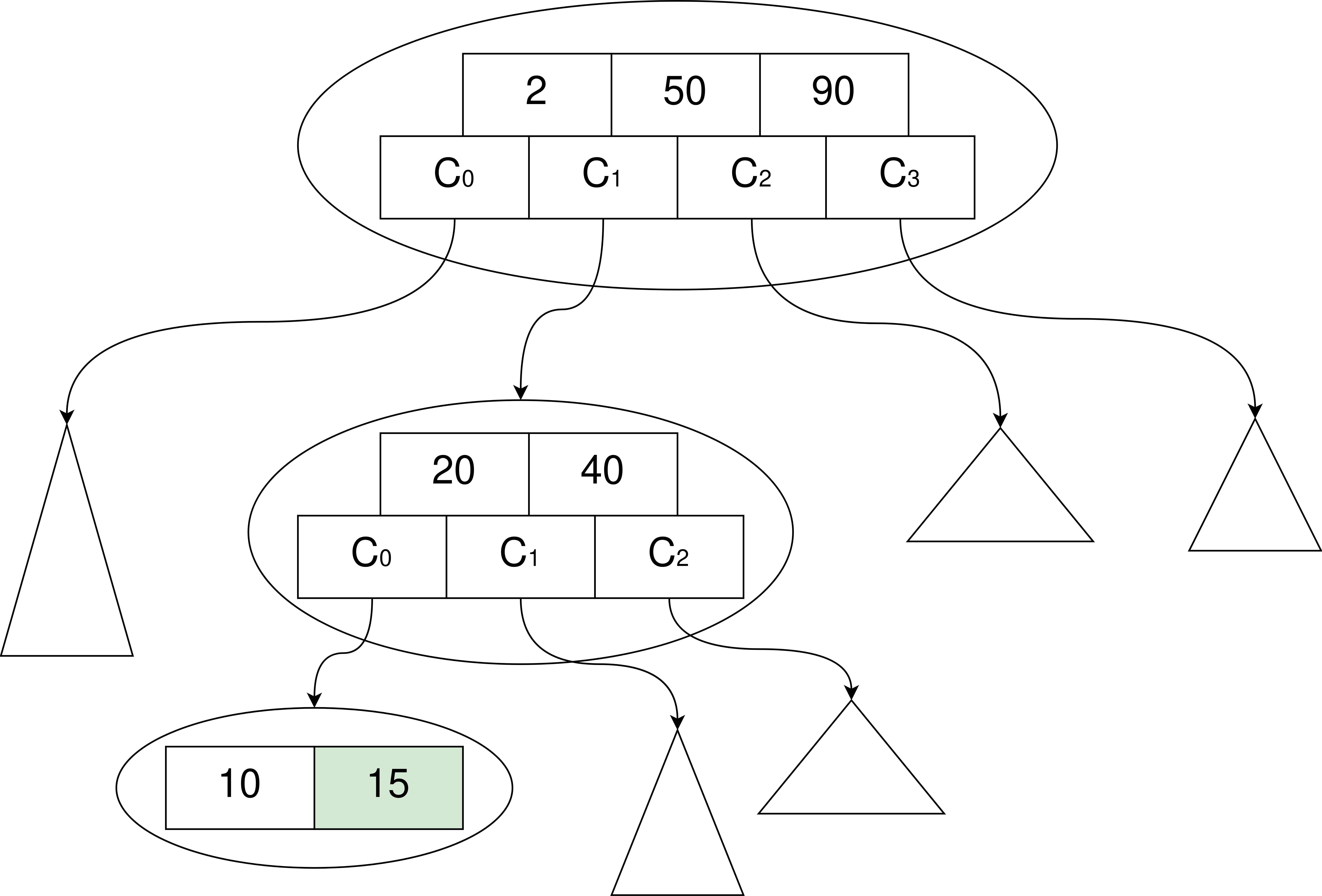}
\end{figure}

To remove a key from IST we introduce \texttt{Exists} array in each node that shows whether the corresponding key in \texttt{Rep} is in the set or not. Thus, we just need to mark a key as removed without physically deleting it. We have to take into account such marked keys during contains and insert operations. The removal algorithm is discussed in more detail in~\cite{mehlhorn1993dynamic,brown2020non,prokopec2020analysis}.


The problem with these update algorithms is that all the new keys may be inserted to a single leaf, making the IST unbalanced (Fig.~\ref{ist-unbalanced-pic}).  In order to keep the execution time low, we should keep the tree balanced. 

\begin{figure}[H]
  \centering
  \caption{Example of an unbalanced IST where the left leaf stores a single key while the right leaf stores many keys.} 
  \label{ist-unbalanced-pic}
  \includegraphics[width=\linewidth]{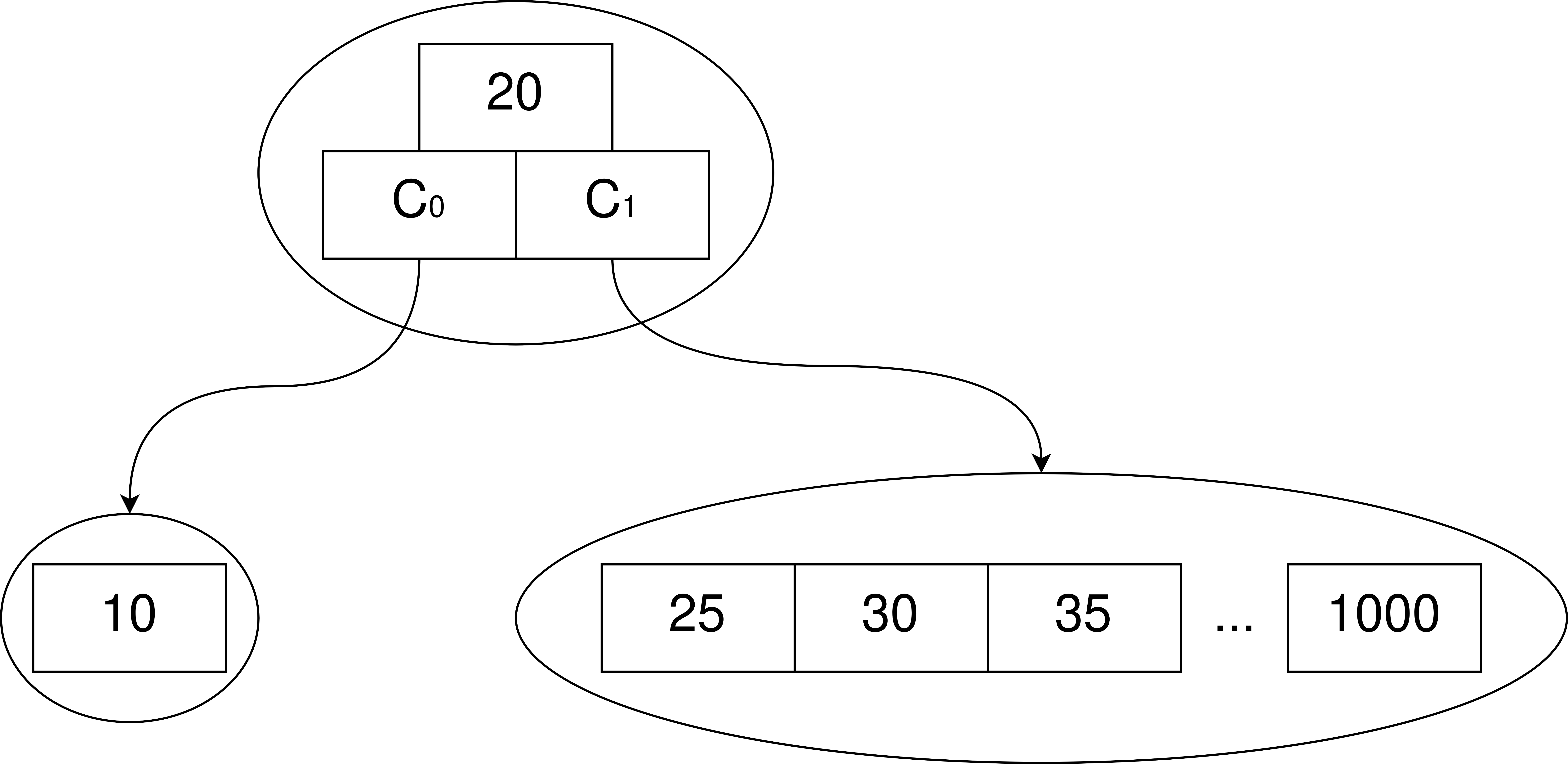}
\end{figure}

\begin{definition}
\label{ideal-definition}
\normalfont Suppose \texttt{H} is some small integer constant, e.g., \texttt{10}. An IST \texttt{T}, storing keys $x_0 < x_1 < \ldots < x_{n-1}$, is said to be \emph{ideally balanced} if either : 

\begin{itemize}
    \item \texttt{T} is a leaf IST and \texttt{n $\leq$ H};
    \item \texttt{T} is a non-leaf IST and (here \texttt{k} is the size of the of the \texttt{Rep} array):
    \begin{itemize}
        \item \texttt{n > H};
        \item $k \in \Theta(\sqrt{n})$;
        \item elements in \texttt{Rep} are equally spaced: \texttt{Rep[0]} must equal to $x_{\lfloor\frac{n}{k}\rfloor}$, \texttt{Rep[1]} must equal to $x_{2 \cdot \lfloor\frac{n}{k}\rfloor}$, \texttt{Rep[2]} must equal to $x_{3 \cdot \lfloor\frac{n}{k}\rfloor}$, and so on: \texttt{Rep[i]} must equal to $x_{(i + 1) \cdot \lfloor\frac{n}{k}\rfloor}$;
        \item all child ISTs $\{C_i\}_{i=0}^k$ are ideally balanced;
    \end{itemize}
\end{itemize}
\end{definition}

Consider an ideally balanced IST storing $n$ keys (Fig.~\ref{IST-ideal-pic}). As stated in~\cite{mehlhorn1993dynamic}:
\begin{itemize}
    \item The root of IST contains $\Theta(\sqrt{n}) = \Theta(n^{\frac{1}{2}})$ keys in its \texttt{Rep} array;
    \item Any node on the second level has \texttt{Rep} array of size $\Theta(\sqrt{n^\frac{1}{2}}) = \Theta(n^{\frac{1}{4}})$;
    \item Any node on the third level of has \texttt{Rep} array of size $\Theta(\sqrt{n^\frac{1}{4}}) = \Theta(n^{\frac{1}{8}})$ keys in its \texttt{Rep} array;
\end{itemize}

Generally, any node on the $t$-th level has \texttt{Rep} array of size $\Theta(n^{\frac{1}{2^t}})$.
Thus, an ideally balanced IST with $n$ keys has a height of $O(\log \log n)$.

\begin{figure}[H]
  \centering
  \caption{Height of an ideal IST} 
  \label{IST-ideal-pic}
  \includegraphics[width=\linewidth]{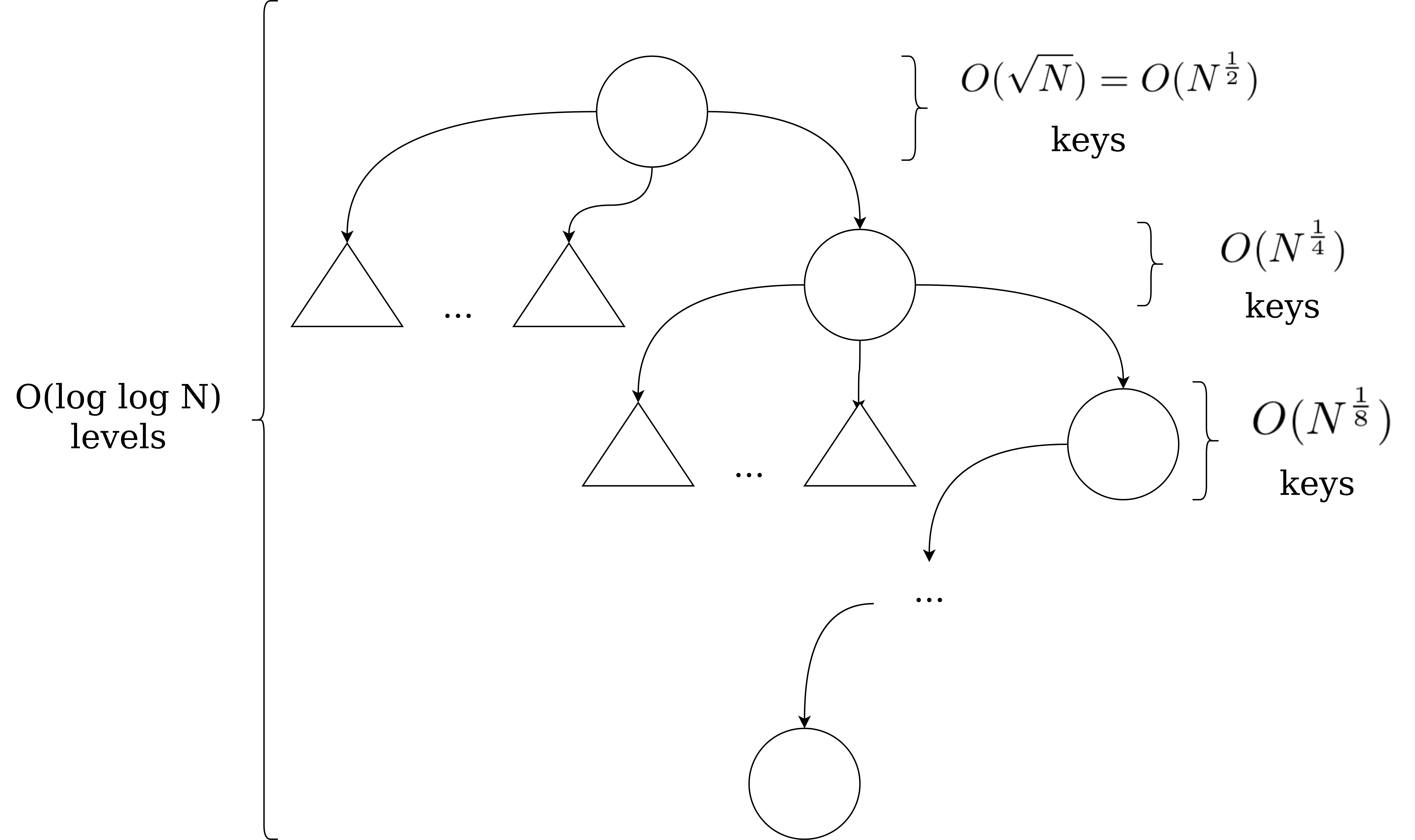}
\end{figure}

In order to keep IST balanced we maintain the number of modifications (both insertions and removals) applied to each subtree \texttt{T}. When the number of modifications applied to some subtree \texttt{T} exceeds the initial size of \texttt{T} multiplied by some constant \texttt{C}, we rebuild \texttt{T} from scratch making it ideally balanced. This rebuilding approach has a proper amortized bounds and is adopted from papers about IST~\cite{mehlhorn1993dynamic,brown2020non,prokopec2020analysis}.

\subsection{Time and space complexity}
\label{time-space-section}



Mehlhorn and Tsakalidis~\cite{mehlhorn1993dynamic} give a definition of a \emph{smooth probability distribution}. For example, uniform distribution $U(a; b)$ is smooth for every $-\infty < a < b < +\infty$.

\begin{definition}
\normalfont An insertion is \emph{$\mu$-random} if the key to be inserted is drawn from probability density $\mu$.

\normalfont A removal is \emph{random} if every key present in the set is equally likely to be removed. 
\end{definition}

Mehlhorn and Tsakalidis~\cite{mehlhorn1993dynamic} state that if probability distribution $\mu$ is smooth:

\begin{itemize}
    \item IST with $n$ keys takes $O(n)$ space;
    \item The expected amortized insertion and removal cost is $O(\log \log n)$;
    \item The amortized insertion and removal cost is $O(\log n)$;
    \item The expected search time on IST, generated by $\mu$-random insertions and random removals, is $O(\log \log n)$;
    \item the worst-case search time is $O(\log^2 n)$;
\end{itemize}

These estimations are based on the fact, that the expected cost of the interpolation search on a sorted array, generated by $\mu$-random insertions and random removals given that $\mu$ is smooth, is $O(1)$, and the ideally balanced IST storing $n$ keys has depth $O(\log \log n)$. 

Therefore, IST can execute operations in $o(\log n)$ time under reasonable assumptions. As our goal, we want to design a parallel-batched version of the IST that processes operations asymptotically faster than known sorted set implementations (e.g., red-black trees).

\section{Parallel-batched Contains}
\label{parallel-contains-section}

In this section, we describe the implementation of \texttt{ContainsBatched(keys[])} operation, introduced in Section~\ref{parallel-batched-section}.
We suppose that \texttt{keys} array is sorted.
For simplicity, we assume that IST does not support removals. In Section~\ref{parallel-remove}, we elaborate on how to support removal operations and what changes should we do in the \texttt{ContainsBatched} implementation in order to support removals.

We implement \texttt{ContainsBatched} operation the following way. At first, we introduce a function \texttt{BatchedTraverse(node, keys[], left, right, result[])}. The purpose of this function is to determine for each index $\texttt{left} \leq \texttt{i} < \texttt{right}$, whether \texttt{keys[i]} is stored in the \texttt{node} subtree. If so, set \texttt{result[i] = true}, otherwise, \texttt{result[i] = false}. 
Given the operation \texttt{BatchedTraverse}, we can implement \texttt{ContainsBatched} with almost zero effort (Listing~\ref{contains-batched-listing}):

\begin{lstlisting}[caption={Implementation of \texttt{ContainsBatched} on top of \texttt{BatchedTraverse routine}},escapeinside={(*}{*)},captionpos=t,numbers=none,label={contains-batched-listing}]
fun ContainsBatched(keys[]):
    result[] := [array of size |keys|]
    // search for all keys in the root subtree (i.e., in the whole IST)
    BatchedTraverse(IST.Root, keys, 0, |keys|, result)
    return result
\end{lstlisting}

Now, we describe \texttt{BatchedTraverse(node, keys[], left, right, result[])} implementation in leaf IST nodes in Section~\ref{leaf-traverse-section} and in non-leaf IST nodes in Section~\ref{traverse-inner-section}.

\subsection{\texttt{BatchedTraverse} in a leaf node}
\label{leaf-traverse-section}

If \texttt{node} is a leaf node, we determine for each key in \texttt{keys[left..right)} whether it exists in \texttt{node.Rep}.
Since \texttt{node} is a leaf, keys cannot be found anywhere else in \texttt{node} subtree. 




We may use \texttt{Rank} function to find the \emph{rank} of each element of \texttt{keys[left..right)} in \texttt{node.Rep} and, thus, determine for each key whether it exists in \texttt{node.Rep} (Fig.~\ref{leaf-contains-pic}). As presented in Section~\ref{primitives-section}, ranks of all keys from subarray \texttt{keys[left..right)} in \texttt{node.Rep} may be computed in parallel in $O(right - left + \vert node.Rep \vert)$ work and $O(\log^2 \left(right - left + \vert node.Rep \vert) \right)$ span.

Let us denote the rank of the key being searched \texttt{r := RankElem(node.Rep, key)}. Note that \texttt{r} equals to \texttt{{$\vert \{ \texttt{x} \in \texttt{node.Rep} : \texttt{x} \leq \texttt{key} \}\vert$}}~--- the number of elements of \texttt{node.Rep}, that are less than or equal to the \texttt{key}. 

\begin{itemize}
   \item If \texttt{r} equals zero, it means that all the elements of \texttt{node.Rep} are strictly greater than \texttt{key}. Thus, \texttt{key} is not in \texttt{node.Rep} and the result is \texttt{false};

   \item Otherwise, \texttt{r} elements of \texttt{node.Rep} are less than or equal to \texttt{key}. In that case, we check \texttt{node.Rep[r - 1]}: if it equals \texttt{key}, the key is found in \texttt{node.Rep}, otherwise, the key is not found (Fig.~\ref{leaf-contains-pic}).
\end{itemize}

\begin{figure}[H]
  \centering
  \caption{Execution of \texttt{BatchedTraverse} in an IST leaf. Here \texttt{Rank(node.Rep, keys[left..right)) = [0, 1, 2, 2, 3, 4]}.} 
  \label{leaf-contains-pic}
  \includegraphics[width=\linewidth]{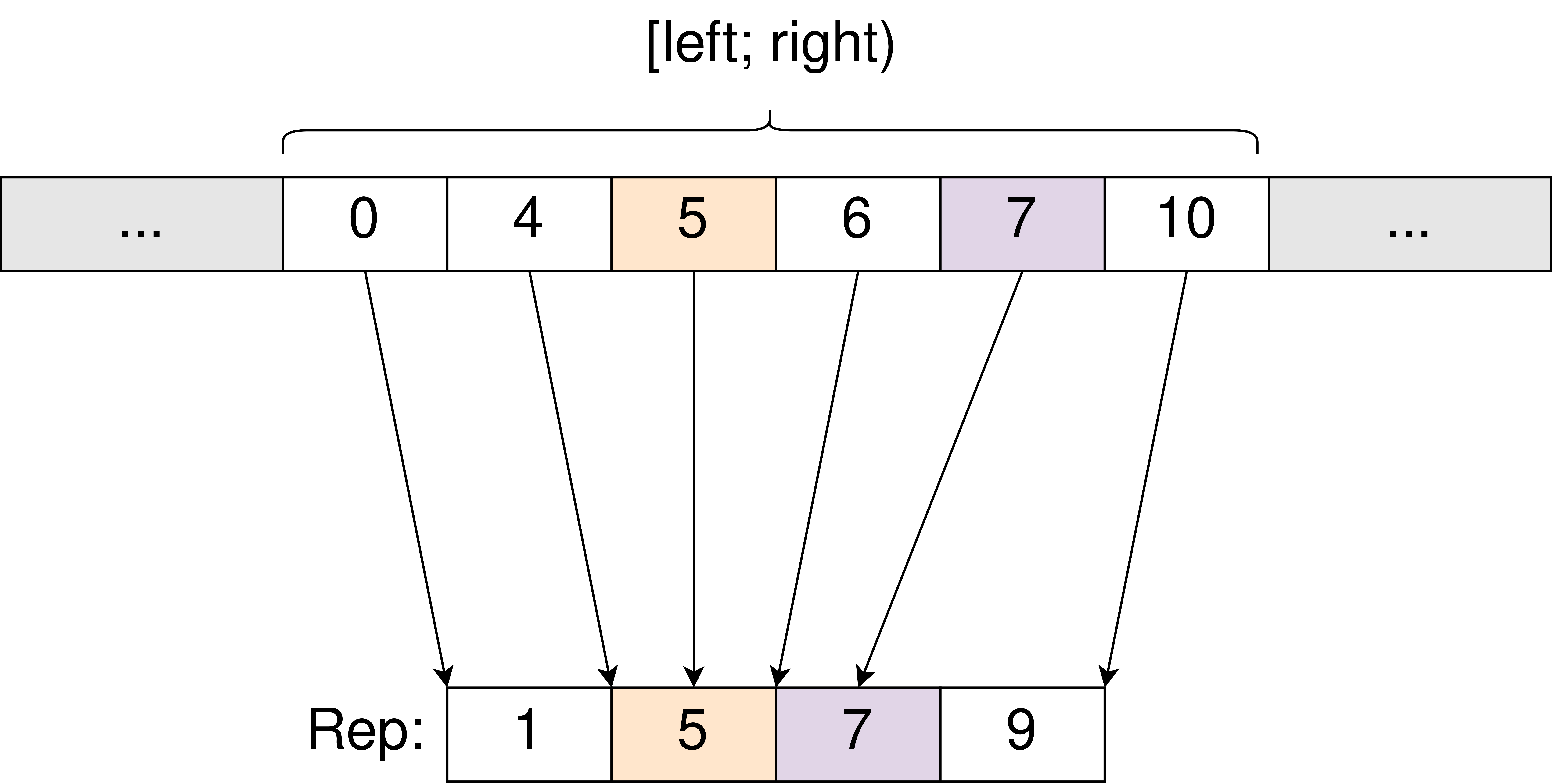}
\end{figure}

We can check the existence of all the keys and set all the results in parallel using parallel loop the following way (Listing~\ref{rank-in-leaf-listing}):
\begin{lstlisting}[caption={Using \texttt{Rank} to find keys in a leaf node in parallel},escapeinside={(*}{*)},captionpos=t,numbers=none,label={rank-in-leaf-listing}]
rank := Rank(node.Rep, keys[left..right))
pfor i in left..right:
    r := rank[i - left]
    if r = 0 or node.Rep[r - 1] (*$\neq$*) keys[i]:
        result[i] (*$\leftarrow$*) false
    else:
        result[i] (*$\leftarrow$*) true
\end{lstlisting}

\subsection{\texttt{BatchedTraverse} in an inner node}
\label{traverse-inner-section}

Consider now the \texttt{BatchedTraverse} procedure on an inner node (Fig.~\ref{inner-contains-pic}).

\begin{figure}[H]
  \centering
  \caption{Execution of \texttt{BatchedTraverse} in an inner node of an IST.} 
  \label{inner-contains-pic}
  \includegraphics[width=\linewidth]{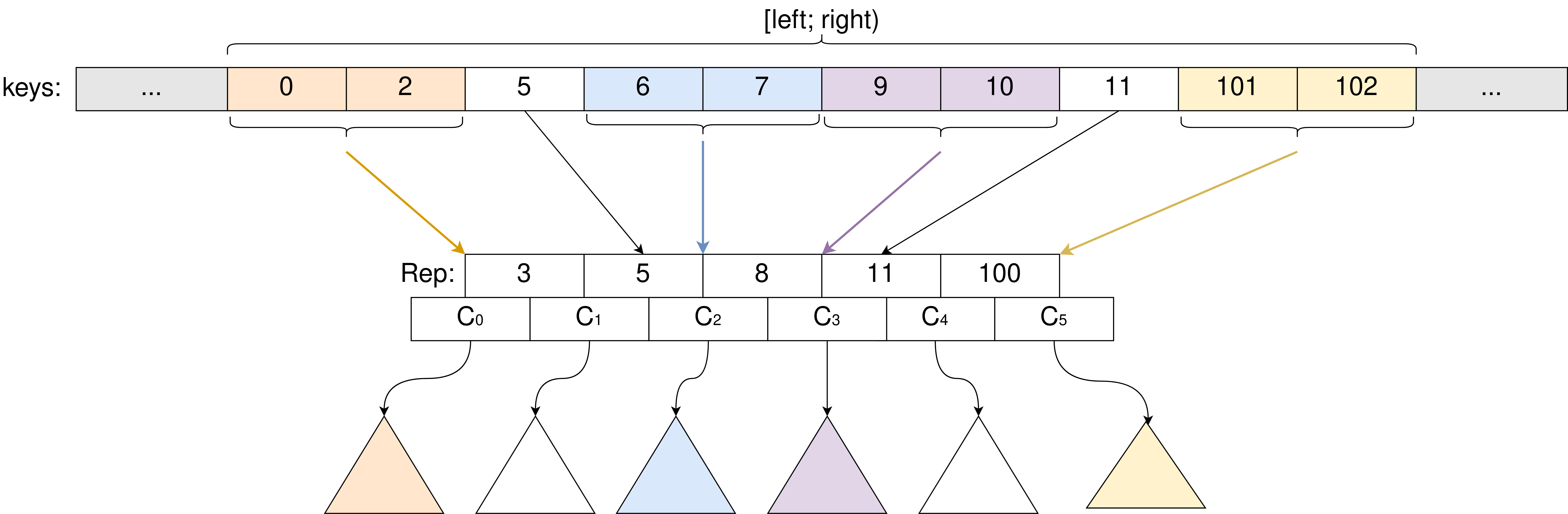}
\end{figure}

We begin its execution with finding the position for each key from \texttt{keys[left..right)} in \texttt{node.Rep} (i.e., the number of \texttt{node.Rep} elements less than each given key). We may do it using \texttt{Rank} function as in Section~\ref{leaf-traverse-section}~--- it will take $O(right - left + \vert node.Rep \vert)$ work and $O(\log^2 \left(right - left + \vert node.Rep \vert) \right)$ span. 

However, we can also use the interpolation search (described in Section~\ref{interpolation-search-section}) and parallel loop (described in Section~\ref{primitives-section}) for that purpose: see Listing~\ref{search-in-leaf-listing}.

\begin{lstlisting}[caption={Using interpolation search to find keys in IST node in parallel},escapeinside={(*}{*)},captionpos=t,numbers=none,label={search-in-leaf-listing}]
key_indexes := [array of size (right - left)]
pfor i in left..right:
    key_indexes[i - left] (*$\leftarrow$*) interpolation_search(node.Rep, keys[i])
\end{lstlisting}

Denoting $S$ as an interpolation search time in \texttt{node.Rep}, this algorithm takes $O((right - left) \cdot S)$ work and $O(\log \left(right - left \right) \cdot T)$ span. As stated in Section~\ref{time-space-section}, $S$ is expected to be $O(1)$ if the IST is obtained using $\mu$-random insertions and random removals given that $\mu$ is \emph{smooth}. Thus, under this assumptions interpolation-based traverse is expected to be faster than \texttt{Rank}-based one (from Section~\ref{leaf-traverse-section}) when \texttt{node.Rep} array is big.

Some keys of the input array \texttt{keys[left..right)} (e.g., \texttt{5} and \texttt{11} in Fig.~\ref{inner-contains-pic}) are found in the \texttt{Rep} array. For such keys, we set \texttt{Result[i] $\leftarrow$ true} (indeed, they exist in \texttt{Rep} array, therefore they exist in \texttt{node} subtree). All other keys can be divided into three categories:

\begin{itemize}
    \item Keys that are strictly less than \texttt{Rep[0]} (e.g., \texttt{0} and \texttt{2} in Fig.~\ref{inner-contains-pic}) can only be found in \texttt{C[0]} subtree (as explained in Section~\ref{IST-structure-section}). Therefore, for such keys we should continue the traversal in \texttt{C[0]}.

    \item Keys that are strictly greater than \texttt{Rep[k - 1]} (e.g., \texttt{100} and \texttt{101} in Fig.~\ref{inner-contains-pic}) can only be found in \texttt{C[k]} (as explained in Section~\ref{IST-structure-section}). Therefore, for such keys we continue the traversal in \texttt{C[k]}.

    \item Keys that lie between \texttt{Rep[i]} and \texttt{Rep[i + 1]} for some $i \in [0; k - 2]$ (e.g., \texttt{6} and \texttt{7} for \texttt{i = 1} or \texttt{9} and \texttt{10} for \texttt{i = 2} in Fig.~\ref{inner-contains-pic}) can only be found in \texttt{C[i + 1]} (as explained in Section~\ref{IST-structure-section}). Therefore, for such keys we continue the traversal in \texttt{C[i + 1]}.
\end{itemize}

Note that some child nodes (e.g., \texttt{C[1]} and \texttt{C[4]} in Fig.~\ref{inner-contains-pic}) are guaranteed not to contain any key from \texttt{keys[left..right)} thus we do not continue the search in such nodes.

After determining in which child the search of each key from \texttt{keys[left..right)} should continue we proceed to searching for keys in children in parallel: for example in Figure~\ref{inner-contains-pic} we would continue searching:

\begin{itemize}
    \item For \texttt{keys[0..2)} in \texttt{C[0]}, thus we execute \texttt{BatchedTraverse(node.C[0], keys[], 0, 2, result[])};
    \item For \texttt{keys[3..5)} in \texttt{C[2]}, thus we execute \texttt{BatchedTraverse(node.C[2], keys[], 3, 5, result[])};
    \item For \texttt{keys[5..7)} in \texttt{C[3]}, thus we execute \texttt{BatchedTraverse(node.C[3], keys[], 5, 7, result[])};
    \item For \texttt{keys[8..10)} in \texttt{C[5]}, thus we execute \texttt{BatchedTraverse(node.C[5], keys[], 8, 10, result[])};
\end{itemize}

All these \texttt{BatchedTraverse} calls can be done in parallel, since there is no dependencies between them.




\section{Parallel-batched Insert}
\label{parallel-insert}

We now consider the implementation of the operation \texttt{InsertBatched(keys[])}, introduced in Section~\ref{parallel-batched-section}.
Again, we suppose that array \texttt{keys[]} is sorted.
For simplicity, we consider \texttt{InsertBatched} implementation on an IST without removals. In Section~\ref{parallel-remove}, we explain how what changes should we do in the \texttt{InsertBatched} implementation to support remove operations.

For simplicity now we consider batch-insertion into unbalanced IST. In Section~\ref{parallel-rebuilding-section} we elaborate on how to make the IST balanced.

We begin the insertion procedure with filtering out keys already present in the IST since there is no point in inserting such keys to the set. 
We can do this using the \texttt{ContainsBatched} routine (see Section~\ref{parallel-contains-section}) together with the \texttt{Filter} primitive (see Section~\ref{primitives-section}): we filter out all the keys for which \texttt{ContainsBatched} returns \texttt{true}. For example, if the IST contains keys \texttt{1, 3, 5, 7, 9} and we are going to insert keys \texttt{[2, 4, 5, 7, 8]} in it, we filter out keys \texttt{5} and \texttt{7} (since they already exist in the IST) and insert only keys \texttt{[2, 4, 8]}.

We implement our procedure recursively in the same way as \texttt{BatchedTraverse}. 
Note, that each \texttt{key} being inserted is not present in IST, thus for each \texttt{key} our traversal finishes in some leaf (Fig.~\ref{insert-leaves-pic}) since we cannot find that \texttt{key} in some \texttt{Rep} array on the traversal path.

\begin{figure}[H]
    \centering
    \caption{Inserting a batch of keys in the IST} 
    \label{insert-leaves-pic}
    \includegraphics[width=\linewidth]{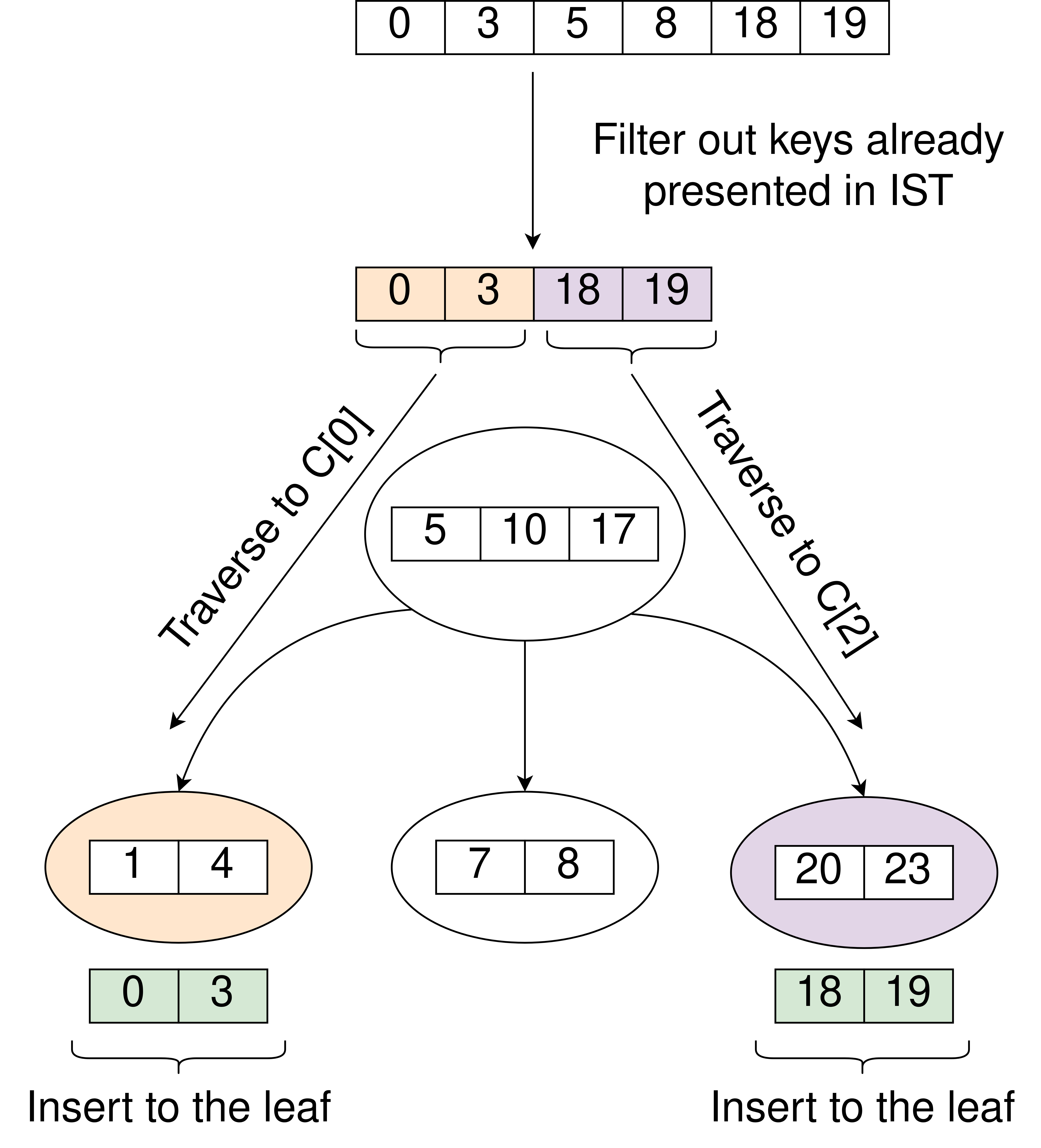}
\end{figure}


After we finish the insertion traversal, we end up with a collection of leaves $\{leaf_i\}_{i=1}^t$ and in each leaf $leaf_i$ from that collection we should insert some corresponding subarray \texttt{keys[$\texttt{left}_i \ldots \texttt{right}_i$)} of the batch being inserted. Note, that in some leaves we do not insert any keys (e.g., the leaf in the middle from Fig.~\ref{insert-leaves-pic}). For example, in Fig.~\ref{insert-leaves-pic} we insert \texttt{keys[0..2)} (i.e., \texttt{0} and \texttt{3}) to the leftmost leaf, while inserting \texttt{keys[2..4)} (i.e., \texttt{18} and \texttt{19}) to the rightmost leaf.

To finish the insertion, we just merge \texttt{keys[$\texttt{left}_i \ldots \texttt{right}_i$)} with \texttt{$\texttt{leaf}_i$.Rep} and get the new \texttt{Rep} array: \texttt{$leaf_i$.Rep $\leftarrow$ Merge(keys[$left_i \ldots right_i$), $leaf_i$.Rep)}.
Now, each target leaf $\texttt{leaf}_i$ contains all the keys that should be inserted into it.

\section{Parallel-batched Remove}
\label{parallel-remove}

We now consider the implementation of the operation \texttt{RemoveBatched(keys[])}, introduced in Section~\ref{parallel-batched-section}. Again, we suppose that array \texttt{keys} is sorted.


We begin the removal procedure with filtering out keys, not presented in the IST, since there is no point in removing such keys from the IST. We can do this using the \texttt{ContainsBatched} routine (see Section~\ref{parallel-contains-section}) together with the \texttt{Filter} primitive (see Section~\ref{primitives-section}): we filter out all the keys for which \texttt{ContainsBatched} returns \texttt{false}. For example, if the IST contains keys \texttt{1, 3, 5, 7, 9} and we are going to remove keys \texttt{[2, 3, 6, 7, 9]} from it, we filter out keys \texttt{2} and \texttt{6} (since they do not exist in the IST) and thus we remove only the keys \texttt{[3, 7, 9]}.

The remove traversal is executed similarly to the traversal from Section~\ref{parallel-contains-section} with one major difference: when we encounter \texttt{key} to be removed in \texttt{Rep} array of some node \texttt{v} we mark the key as logically removed without physically deleting it from \texttt{v.Rep} (Fig~\ref{remove-mark-pic}). 

\begin{figure}[H]
  \centering
  \caption{Removing a batch of keys from the IST} 
  \label{remove-mark-pic}
  \includegraphics[width=\linewidth]{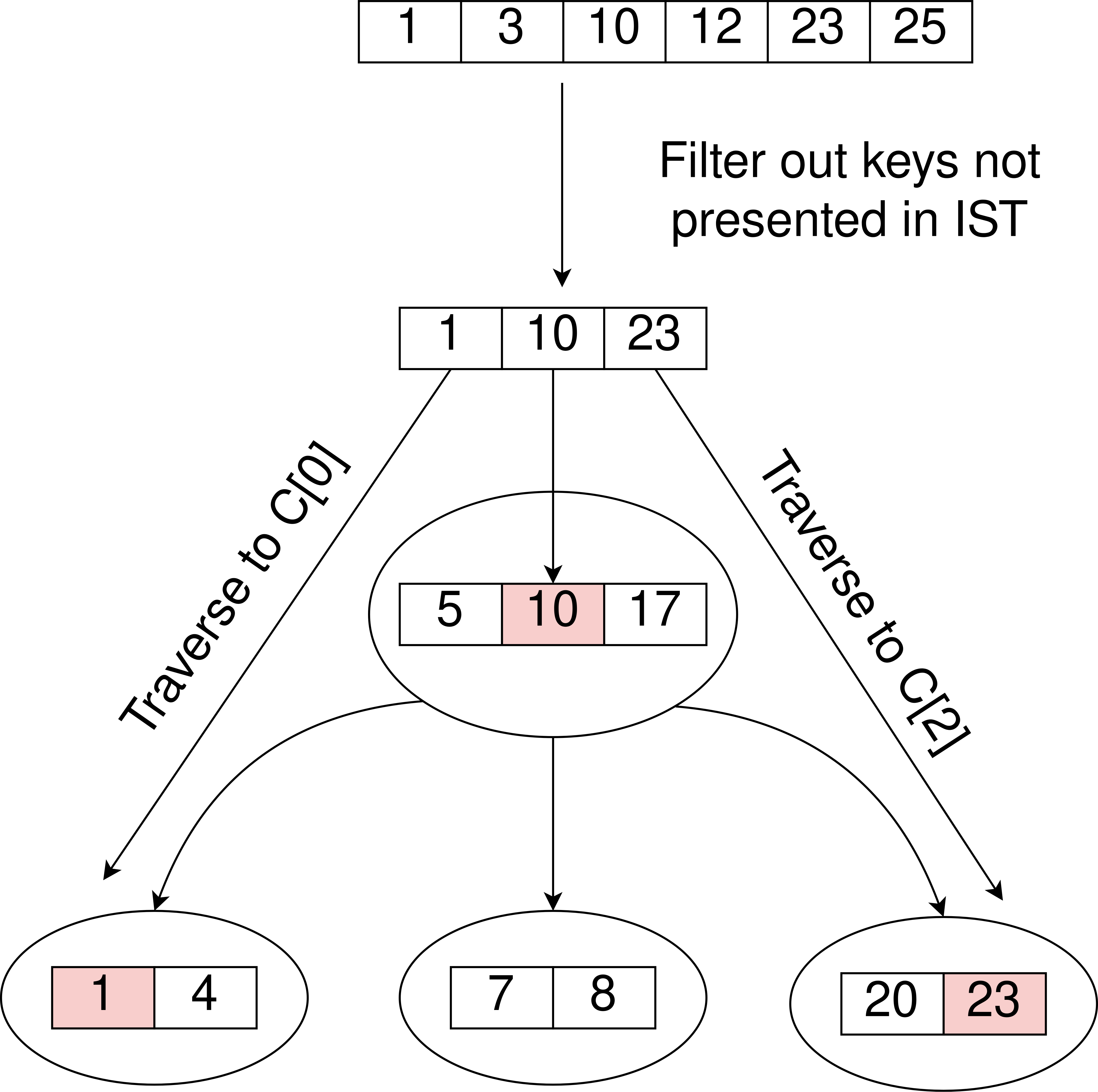}
\end{figure}

To allow marking keys as removed we augment each node \texttt{v} with an array \texttt{Exists}. \texttt{v.Exists} array has the same size as \texttt{v.Rep} (say \texttt{k}) and all its elements are initially set to \texttt{true}. When we need to remove \texttt{i}-th key from \texttt{v.Rep}, we just set \texttt{v.Exists[i] $\leftarrow$ false} (similar technique was proposed e.g., in ~\cite{mehlhorn1993dynamic,brown2020non,prokopec2020analysis}).

Note that now we may have keys, that physically exist in some \texttt{Rep} array in the tree, but such keys do not exists in the tree logically, since they have been removed by some previous \texttt{RemoveBatched} operation (see e.g., keys \texttt{1}, \texttt{10} and \texttt{23} in Fig.~\ref{remove-mark-pic}).
Since now we have a logical removal, we should modify the implementations of \texttt{ContainsBatched} (from Section~\ref{parallel-contains-section}) and \texttt{InsertBatched} (from Section~\ref{parallel-insert}) so that these operations would take into account the possible existence of such logically removed keys.

During the execution of \texttt{ContainsBatched} when we encounter the \texttt{key} being searched in the \texttt{Rep} array of some node \texttt{v} (say \texttt{v.Rep[i] = key}), we check \texttt{v.Exists[i]}: 

\begin{itemize}
    \item If \texttt{v.Exists[i] = true} then we conclude that \texttt{key} exists in the set;
    \item Otherwise, we conclude that \texttt{key} does not exist in the set, since it has been removed by previous \texttt{RemoveBatched} operation;
\end{itemize}

Now, we explain the updates to \texttt{InsertBatched} procedure. As was stated in Section~\ref{parallel-insert}, we cannot encounter any of the keys being inserted in the \texttt{Rep} array of any node of the IST, since we filter out all the keys existing in the IST before the insertion actually begins.
However, when keys can be logically removed this is not true anymore: we can encounter logically removed keys in the \texttt{Rep} array of some node \texttt{v}. Of course, such keys have the corresponding entries in \texttt{v.Exists} array set to \texttt{false}, since such keys do not logically exist in IST (Fig.~\ref{remove-insert-1-pic}). 

Suppose we are inserting \texttt{key} to the IST and we encounter it in the \texttt{Rep} array of some node \texttt{v} (i.e., \texttt{v.Rep[i] = key} for some \texttt{i}). Thus, we can just set \texttt{v.Exists[i] $\leftarrow$ true} (Fig.~\ref{remove-insert-2-pic}). This way the insert operation ``revives'' a previously removed key.

\begin{figure}
\centering
    \begin{subfigure}[b]{0.45\linewidth}
        \centering
        \caption{Keys \texttt{1}, \texttt{10} and \texttt{23} are marked as removed}
        \label{remove-insert-1-pic}
        \includegraphics[width=\linewidth]{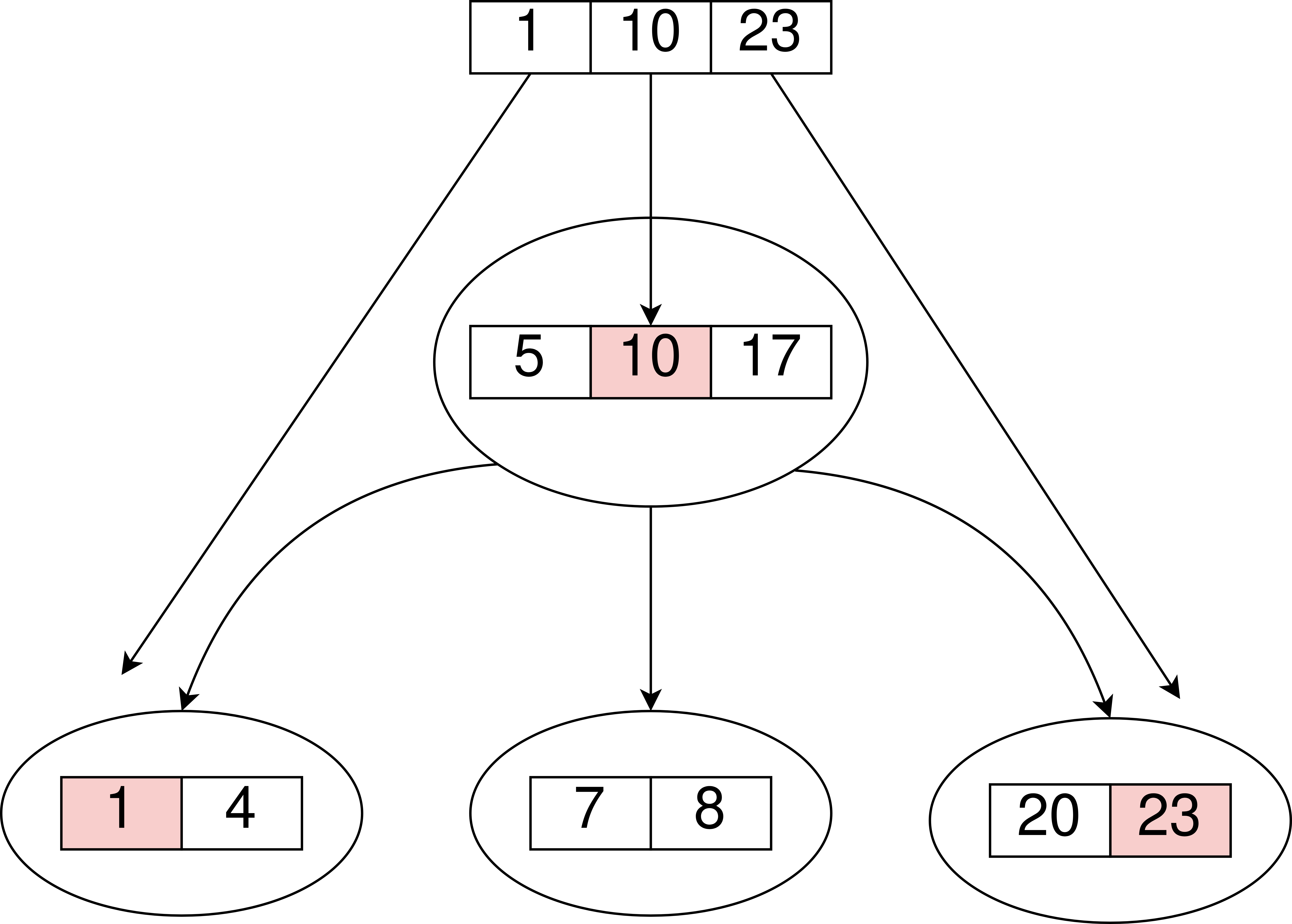}
    \end{subfigure}
    \hfill
    \begin{subfigure}[b]{0.45\linewidth}
        \centering
        \caption{Keys \texttt{10} and \texttt{23} are ``revived`` by a subsequent insert operation}
        \label{remove-insert-2-pic}
        \includegraphics[width=\linewidth]{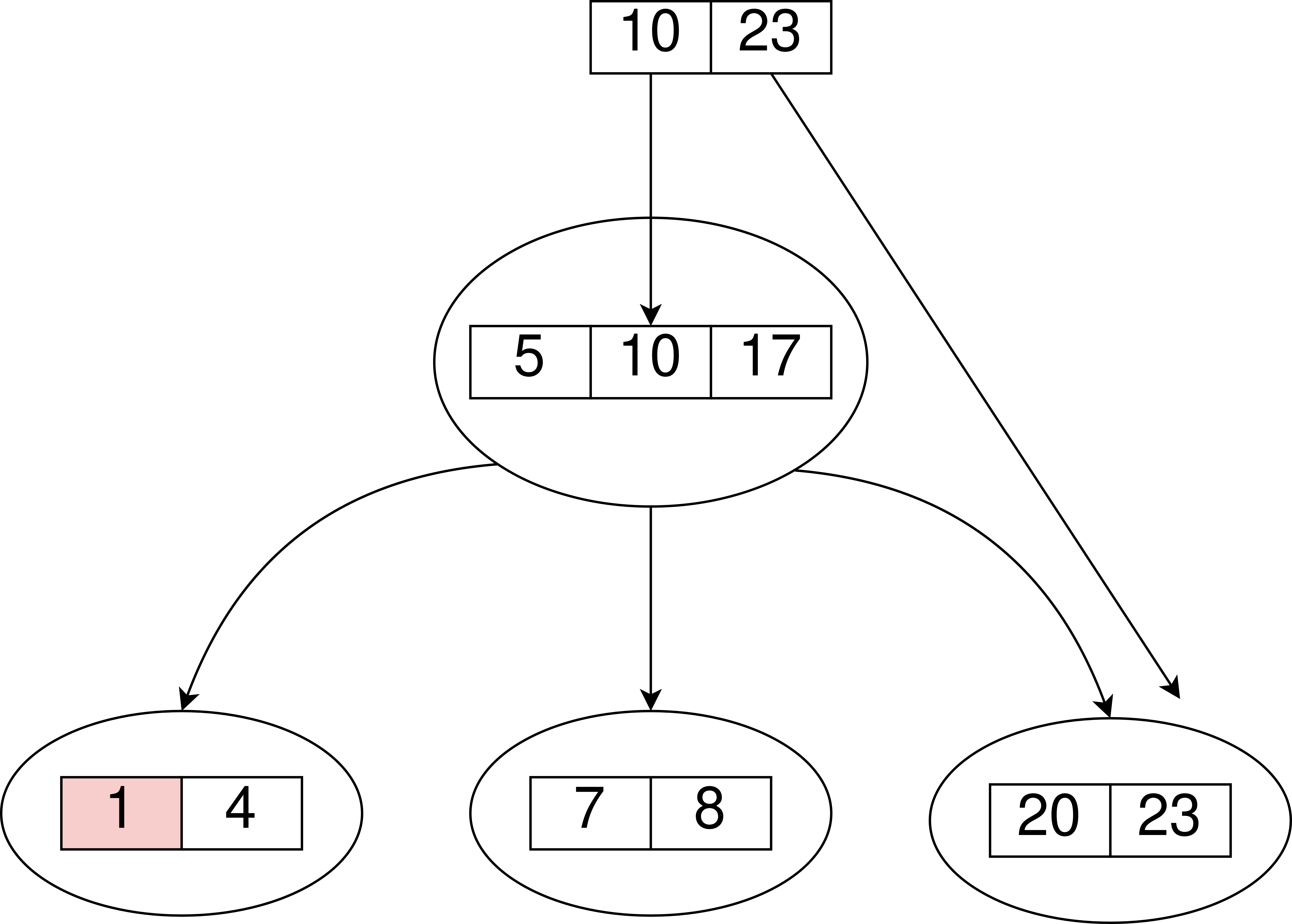}
    \end{subfigure}
\caption{Insertion of a key, that still exists in the IST physically, but is removed logically}
\label{tree-pics}
\end{figure}

\section{Parallel tree rebuilding}
\label{parallel-rebuilding-section}

\subsection{Rebuilding principle}

Following the removal algorithm from Section~\ref{parallel-remove}, we face the following problem: logically removed keys still reside in the IST wasting memory space (Fig.~\ref{empty-ist-pic}), since we do not have any mechanism for physically deleting removed keys. We can even end up in a situation when all the keys in the IST are logically removed, but the IST still occupies a great amount of memory without being able to reclaim any of the occupied space.

\begin{figure}[H]
  \centering
  \caption{IST, that is logically empty, still occupies some space for removed keys} 
  \label{empty-ist-pic}
  \includegraphics[width=\linewidth]{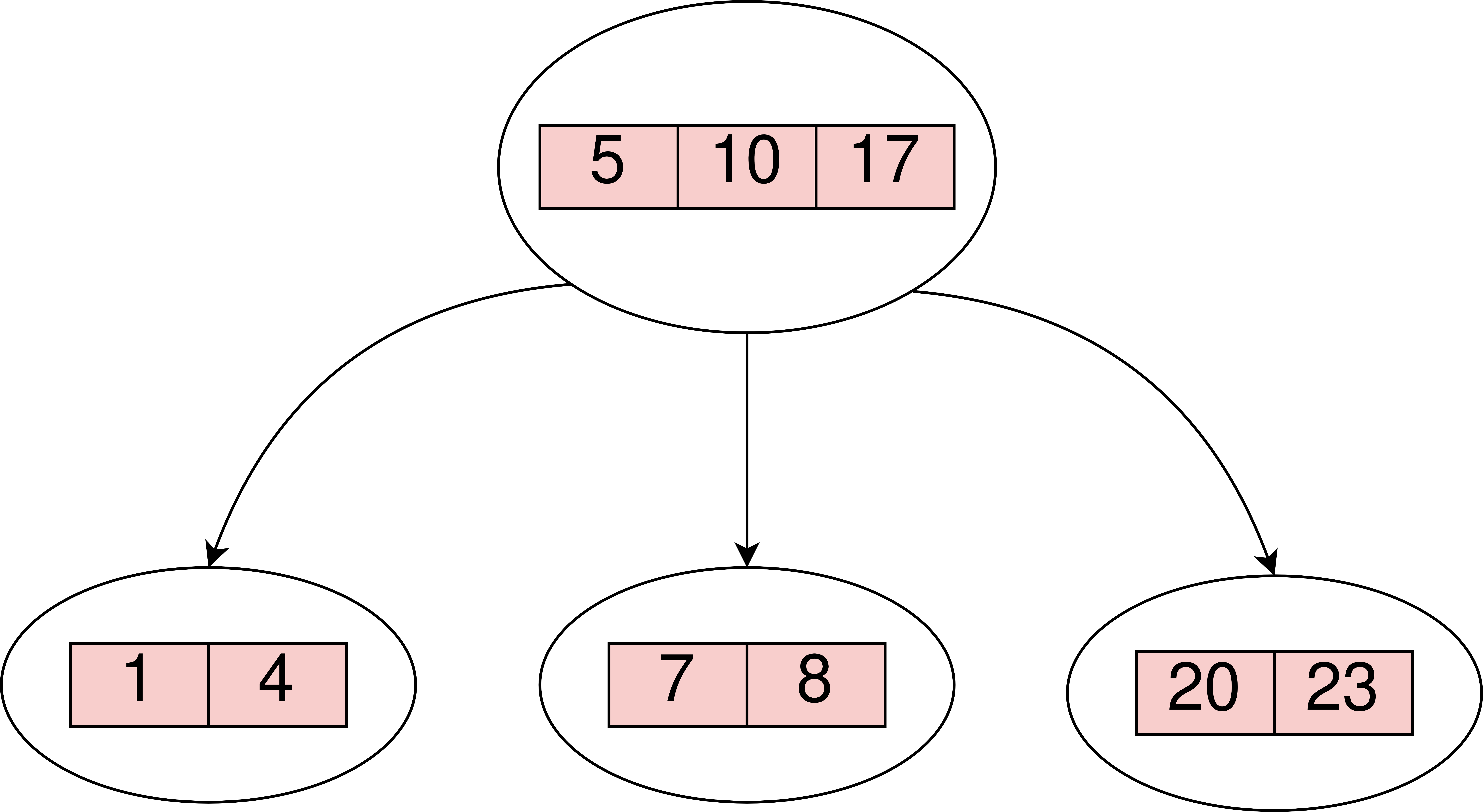}
\end{figure}

We can employ \emph{subtree rebuilding} approach to reclaim space, occupied by logically removed keys. Moreover, as stated in Section~\ref{insert-balancing-sequenctial-section}, we employ the very same approach to keep the IST balanced (and thus keep the operation execution time bounded).

Our rebuilding algorithm is conceptually similar to the algorithm proposed e.g., in~\cite{mehlhorn1993dynamic,brown2020non,prokopec2020analysis}: For each node of the IST we maintain \texttt{Mod\_Cnt}~--- the number of modifications (successful insertions and removals) applied to that node subtree. Moreover, each node stores \texttt{Init\_Subtree\_Size}~--- the number of keys in that node subtree when the node was created.

Suppose we are executing an update batch operation \texttt{Op} (i.e., either \texttt{InsertBatched} or \texttt{RemoveBatched}) in node \texttt{v} and \texttt{Op} increases \texttt{v.Mod\_Cnt} by \texttt{k} (i.e., it either inserts \texttt{k} new keys to \texttt{v} subtree or removes \texttt{k} existing keys from \texttt{v} subtree).

If \texttt{v.Mod\_Cnt + k $\leq$ C $\cdot$ v.Init\_Subtree\_Size} (where \texttt{C} is a predefined integer constant, e.g., \texttt{2}) we increment \texttt{v.Mod\_Cnt} by \texttt{k} and continue the execution of \texttt{Op} in an ordinary way (as described in Sections~\ref{parallel-insert} and ~\ref{parallel-remove}); Otherwise, we rebuild the whole subtree of \texttt{v}. The subtree rebuilding works in the following way:

\begin{enumerate}
    \item We \emph{flatten} the subtree into an array: we collect all non-removed keys from the subtree to array \texttt{subtree\_keys} in ascending order. This operation is described in more detail in Section~\ref{flatten-section};

    \item We apply the operation, that triggered the rebuilding, to the \texttt{subtree\_keys} array:

    \begin{enumerate}
        \item If \texttt{InsertBatched(v, keys[], left, right)} operation triggered the rebuilding, we merge (see Section~\ref{primitives-section}) \texttt{keys[left..right)} into \texttt{subtree\_keys}: \texttt{subtree\_keys $\leftarrow$ Merge(subtree\_keys, keys[left..right))};

        \item If \texttt{RemoveBatched(v, keys[], left, right)} operation triggered the rebuilding, we filter out \texttt{keys[left..right)} from the \texttt{subtree\_keys} via the \texttt{Difference} operation (see Section~\ref{primitives-section}): \texttt{subtree\_keys $\leftarrow$ Difference( subtree\_keys, keys[left..right))};
    \end{enumerate}

    \item Finally, we build an ideal IST \texttt{v'}, containing all entries from \texttt{subtree\_keys}. This operation is described in more detail in Section~\ref{build-section}. Note, that \texttt{v'} will contain all the keys we were inserting into \texttt{v} (if we are inserting new keys) or will not contain the keys were removing from \texttt{v} (if we are removing the existing keys).
\end{enumerate}

\subsection{Flattening an IST into an array in parallel}
\label{flatten-section}

First of all, we need to know how many keys are located in each node subtree. We store this number in a \texttt{Size} variable in each node and maintain it the following way: 

\begin{itemize}
    \item When creating new node \texttt{v}, initialize \texttt{v.Size} with the initial number of keys in \texttt{v} subtree;
    \item When inserting \texttt{m} new keys to \texttt{v}'s subtree, increment \texttt{v.Size} by \texttt{m};
    \item When removing \texttt{m} existing keys from \texttt{v}'s subtree, decrement \texttt{v.Size} by \texttt{m}.
\end{itemize}

To flatten the whole subtree of \texttt{node} we allocate an array \texttt{subtree\_keys} of size \texttt{node.Size} where we shall store all the keys from \texttt{node} subtree. We implement the flattening recursively, via the \texttt{Flatten(v, subtree\_keys[], left, right)} procedure, which fills \texttt{subtree\_keys[left..right)} subarray with all the keys from \texttt{v}'s subtree. To flatten the whole subtree of \texttt{node} into newly-allocated array \texttt{subtree\_keys} of size \texttt{node.Size} we use \texttt{Flatten(node, subtree\_keys, 0, node.Size)}.

Note that non-leaf node \texttt{v} has \texttt{2k + 1} sources of keys:

\begin{itemize}
    \item \texttt{v.C[0]} containing \texttt{v.C[0].Size} keys;

    \item \texttt{v.Rep[0]} containing \texttt{1} key if \texttt{v.Exists[0] = true} and \texttt{0} keys otherwise;

    \item \texttt{v.C[1]} containing \texttt{v.C[1].Size} keys;

    \item \texttt{v.Rep[1]} containing \texttt{1} key if \texttt{v.Exists[1] = true} and \texttt{0} keys otherwise;

    \item ...

    \item \texttt{v.Rep[k - 1]} containing \texttt{1} key if \texttt{v.Exists[k - 1] = true} and \texttt{0} keys otherwise;

    \item \texttt{v.C[k]} containing \texttt{v.C[k].Size} keys.
\end{itemize}

Here \texttt{C[i]} is \texttt{(2 $\cdot$ i)}-th key source and \texttt{Rep[i]} is \texttt{(2 $\cdot$ i + 1)}-th key source. Note that for a leaf node all children just contain zero keys. 

Now for each key source we must find the position in the \texttt{subtree\_keys} array where that source will place its keys. 
To do this we calculate array \texttt{sizes} of size \texttt{2k + 1}: \texttt{i}-th source of keys stores its keys count in \texttt{sizes[i]}. After that we calculate \texttt{positions := Scan(sizes)} to find the prefix sums of sizes array. After that 

\begin{equation*}
            \texttt{positions}[i] = 
             \begin{cases}
               0 & i = 0\\
               \sum\limits_{j = 0}^{i - 1} \texttt{sizes}[j] & i > 0
             \end{cases}
            \end{equation*}

Consider now \texttt{i}-th key source. All prior key sources will contain \texttt{positions[i]} keys in total, thus \texttt{i}-th key source should place its keys into \texttt{subtree\_keys} array starting from \texttt{left + positions[i]} position (Fig.~\ref{flatten-parallel}). Therefore:

\begin{itemize}
    \item \texttt{v.C[i]} places its keys in the \texttt{subtree\_keys} array starting from \texttt{left + positions[2 $\cdot$ i]} (since \texttt{v.C[i]} is \texttt{(2 $\cdot$ i)}-th key source). Thus
    we execute \texttt{Flatten(v.C[i], subtree\_keys, left + positions[2 $\cdot$ i], left + positions[2 $\cdot$ i] + v.C[i].Size)} to let \texttt{C[i]} place its keys in the \texttt{subtree\_keys} array;

    \item If \texttt{v.Exists[i] = false} then \texttt{v.Rep[i]} should not be put in the \texttt{subtree\_keys} array since \texttt{v.Rep[i]} is logically removed;

    \item Otherwise, \texttt{v.Exists[i] = true} and \texttt{v.Rep[i]} should be placed at \texttt{subtree\_keys[left + positions[2 $\cdot$ i + 1]]} since \texttt{v.Rep[i]} is the \texttt{(2 $\cdot$ i + 1)}-th key source.
\end{itemize}

Each key source can be processed in parallel, since there are no data dependencies between them.

\begin{figure}
  \centering
  \caption{Parallel flattening of an IST node} 
  \label{flatten-parallel}
  \includegraphics[width=\linewidth]{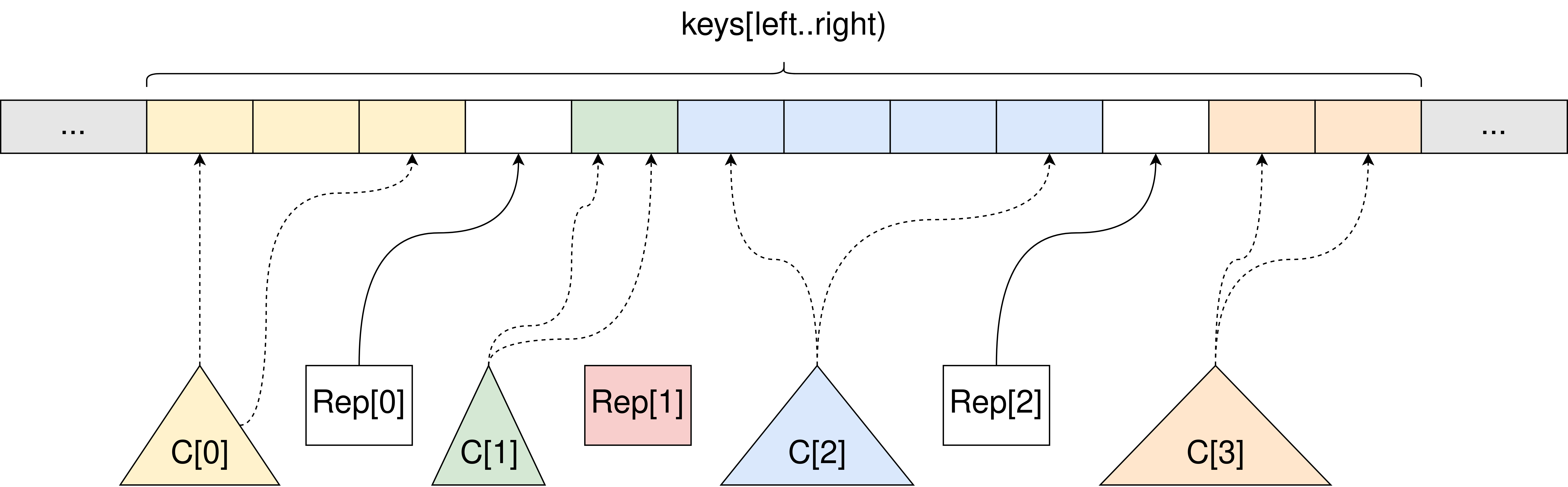}
\end{figure}

In Fig.~\ref{flatten-parallel}, \texttt{C[0]} will place its keys in \texttt{keys[left .. left + 3)} subarray, \texttt{Rep[0]} will be placed in \texttt{keys[left + 3]}, \texttt{C[1]} will place its keys in \texttt{keys[left + 4 .. left + 5)} subarray, \texttt{Rep[1]} will not be placed in \texttt{keys} array since its logically removed, \texttt{C[2]} will place its keys in \texttt{keys[left + 5 .. left + 9)} subarray, \texttt{Rep[2]} will be placed in \texttt{keys[left + 9]} and \texttt{C[3]} will place its keys in \texttt{keys[left + 10 .. left + 12)} subarray.

\subsection{Building an ideal IST in parallel}
\label{build-section}

Suppose we have a sorted array of keys and we want to build an \emph{ideally balanced IST} (see Section~\ref{insert-balancing-sequenctial-section}) from these keys. We implement this procedure recursively via \texttt{build\_IST\_subarray(keys[], left, right)} procedure~--- it builds an ideal IST containing keys from the \texttt{keys[left..right)} subarray and returns the root of the newly-built subtree. Thus, to build a new subtree from array \texttt{keys} we just use \texttt{new\_root := build\_IST\_subarray(keys[], 0, |keys|)} to obtain the root of the IST built from the whole \texttt{keys} array.

If the size of the subarray (i.e., \texttt{right - left}) is less than some constant $H$, we return a leaf node containing all the keys from \texttt{keys[left..right)} in its \texttt{Rep} array. 

Otherwise (i.e., if \texttt{right - left} is big enough), we have to build a non-leaf node. Let us denote \texttt{m := right - left; k := $\lfloor \sqrt{m} \rfloor - 1$}. As follows from Definition~\ref{ideal-definition}, \texttt{Rep} array should have size $\Theta(\sqrt{m})$ and its elements must be equally spaced keys of the initial array. Thus, we copy $k$-th key into \texttt{Rep[0]}, $(2 \cdot k)$-th key into \texttt{Rep[1]}, $(3 \cdot k)$-th key into \texttt{Rep[2]} and so on~--- in general, we copy $(i + 1) \cdot k$-th key into \texttt{Rep[i]}.
Note, our subarray \texttt{keys[left..right)} begins at position \texttt{left} of the initial array \texttt{keys}. Thus, we copy key \texttt{keys[left + (i + 1) $\cdot$ k]} into \texttt{Rep[i]}. All the copying can be done in parallel since there are no data dependencies. This way we obtain \texttt{Rep} array of size $\Theta(\sqrt{m})$ filled with equally-spaced keys of the initial subarray (Fig.~\ref{build-children-pic}).


Now we should build the children of the newly-created node (Fig.~\ref{build-children-pic}):

\begin{itemize}
    \item \texttt{Rep[0] = keys[left + k]}. Thus, all keys less than \texttt{keys[left + k]} will be stored in \texttt{C[0]} subtree (see Section~\ref{IST-section} for details). Since \texttt{keys} array is sorted, \texttt{C[0]} must be built from \texttt{keys[left..left + k)}, thus \texttt{C[0] = build\_IST\_subarray(keys[], left, left + k)};

    \item for $1 \leq i \leq k - 2$, \texttt{Rep[i - 1] = keys[left + i $\cdot$ k]} and \texttt{Rep[i] = keys[left + (i + 1) $\cdot$ k]}. Thus, all keys \texttt{x} such that \texttt{Rep[i-1] < x < Rep[i]} should be stored in \texttt{C[i]} subtree. Since \texttt{keys} array is sorted, \texttt{C[i]} must be built from the subarray \texttt{keys[left + i $\cdot$ k + 1..left + (i + 1) $\cdot$ k)}, thus \texttt{C[i] = build\_IST\_subarray(keys[], left + i $\cdot$ k + 1, left + (i + 1) $\cdot$ k)};

    \item \texttt{Rep[k - 1] = keys[left + $k^2$]} (note, that \texttt{left + $k^2$ = left + ${\left(\lfloor \sqrt{\texttt{right} - \texttt{left}} \rfloor - 1 \right)}^2 < \texttt{right}$}).
    Thus, all keys greater than \texttt{keys[left + $k^2$]} are stored in \texttt{C[k]} subtree. Since \texttt{keys} array is sorted, \texttt{C[k]} must be built from \texttt{keys[left + $k^2$..right)}, thus \texttt{C[k] = build\_IST\_subarray(keys[], left + $k^2$, right)}.
\end{itemize}

We can build all children in parallel, since there is no data dependencies between them.

\begin{figure}[H]
  \centering
  \caption{Building children of a new node} 
  \label{build-children-pic}
  \includegraphics[width=\linewidth]{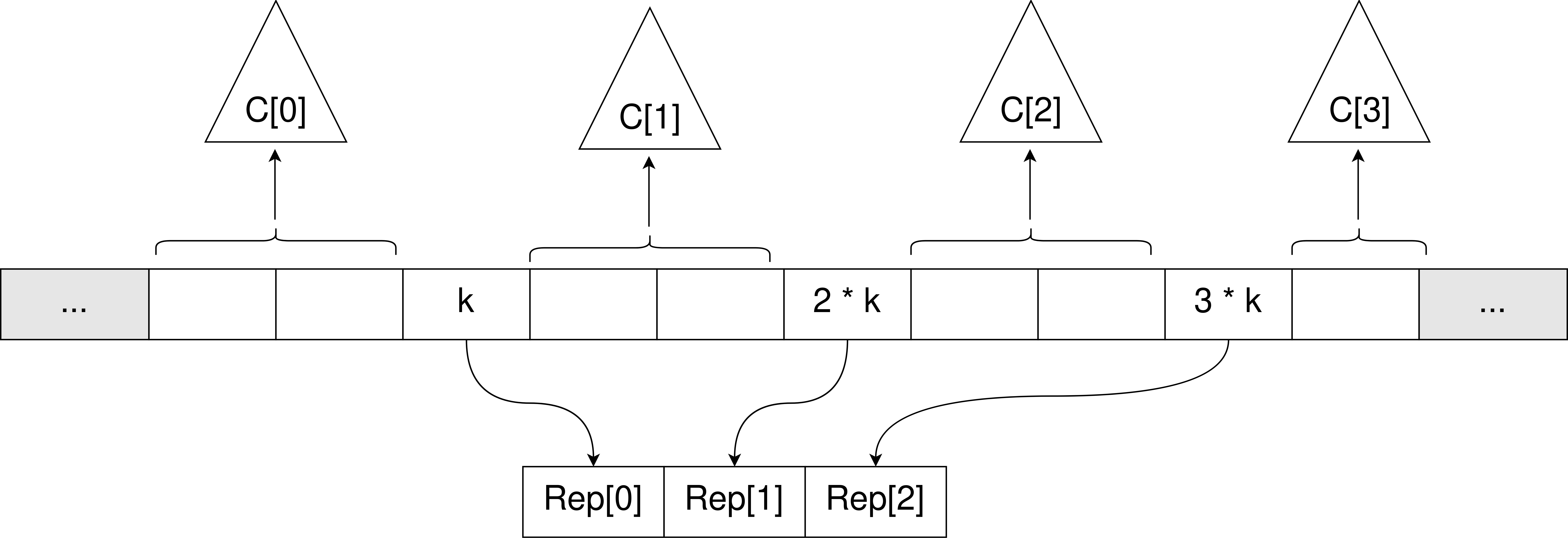}
\end{figure}

To finish the construction of a node we need to calculate \texttt{node.ID} array described in Section~\ref{interpolation-search-section}. We can build it in the following way:

\begin{enumerate}
    \item Create an array \texttt{bounds} of size \texttt{m + 1} such that \texttt{bound[i]} = $\texttt{keys[left]} + i \cdot (\texttt{keys[right - 1]} - \texttt{keys[left]}) / m$ where $m = \Theta(n^\varepsilon)$ with some $\varepsilon \in [\frac{1}{2}; 1)$;

    \item Use \texttt{ID := Rank(bounds, node.Rep)}. After that \texttt{ID[i] = j} iff $Rep[j] \leq \texttt{bounds[i]} < Rep[j + 1]$
\end{enumerate}

\section{Theoretical results}
\label{theoretical-section}
In this section, we present the theoretical bounds for our data structure. These bounds are quite trivial, so, we just explain the intuition behind them.

\begin{theorem}
The flatten operation of an IST with $n$ keys has $O(n)$ work and $O(\log^3 n)$ span. The building procedure of an ideal IST from an array of size $n$ has $O(n)$ work and $O(\log n \cdot \log \log n)$ span. Thus, the rebuilding of IST with $n$ keys costs $O(n)$ work and $O(\log^3 n)$ span.
\end{theorem}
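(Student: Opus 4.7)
The plan is to prove each of the three claims --- Flatten, Build, Rebuild --- by the same level-by-level accounting on the IST, using a charging argument for work and a geometric-sum argument over the tree depth for span. For \textbf{Flatten}, the work bound follows by charging: at each node $v$ with $k_v$ representative keys the procedure performs $O(k_v)$ local work outside recursive calls --- assembling the \texttt{sizes} array, running \texttt{Scan} on it, and dispatching the $2k_v{+}1$ key sources in parallel. Since $\sum_v k_v = O(n)$ (each key appears in exactly one Rep), total work is $O(n)$. For the span, sibling recursive calls execute in parallel, giving a recurrence of the form $T(m) \leq S_{\mathrm{local}}(m) + T(m^\ast)$ with $m^\ast$ the size of the largest child subtree. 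Invoking the rebuilding invariant to bound the IST depth by $O(\log \log n)$ and noting that per-level span is polylogarithmic, one obtains a bound comfortably within $O(\log^3 n)$.

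For \textbf{Build}, the work bound uses the same charging: at each node of the output we do $O(\text{Rep size})$ parallel copying of equally-spaced keys plus one \texttt{Rank} call to populate the ID array, and these sum to $O(n)$ because the Rep sizes telescope over the $O(\log\log n)$ levels of the ideal tree. For the span, the output is ideal by construction, so its depth is exactly $O(\log \log n)$; at each level the dominant cost is the copying loop together with the \texttt{Rank} invocation, contributing $O(\log n)$ span per level. Multiplying the two yields the claimed $O(\log n \cdot \log \log n)$. For \textbf{Rebuild}, I would simply compose the three stages: \texttt{Flatten} of the old subtree, a single \texttt{Merge} (for \texttt{InsertBatched}) or \texttt{Difference} (for \texttt{RemoveBatched}) against the triggering batch --- both $O(n)$ work and $O(\log^2 n)$ span by the primitives of Section~\ref{primitives-section} --- and then \texttt{Build} on the resulting sorted array. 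Works add to $O(n)$, and the spans $O(\log^3 n) + O(\log^2 n) + O(\log n \cdot \log \log n)$ are dominated by the Flatten term, giving $O(\log^3 n)$ overall.

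The main obstacle I expect is justifying, for the Flatten span, that the input subtree really has depth $O(\log \log n)$ at the moment Flatten is invoked. This is not vacuous because the subtree may have absorbed up to $C \cdot \texttt{Init\_Subtree\_Size}$ modifications since its last rebuild, and its local structure is therefore no longer strictly ideal. The resolution is to appeal to the amortised balancing scheme of Section~\ref{insert-balancing-sequenctial-section}: any descendant whose subtree has exceeded its threshold would itself have triggered a rebuild earlier, so every surviving node's current size is within a constant factor of its creation size. This preserves the square-root branching from level to level, which is exactly what is needed for the $O(\log\log n)$ depth bound and hence for the per-level geometric sums in the span analyses to go through. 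Once this structural bound is pinned down, the remaining steps --- charging and summing $\sum_{t \geq 0} O(\log n / 2^t)$-style series --- are routine.
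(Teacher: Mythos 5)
Your work bounds, your Build analysis, and your composition argument for Rebuild all match the paper's reasoning. The genuine gap is exactly where you flagged it: the depth of the subtree at the moment \texttt{Flatten} is invoked. Your proposed resolution --- that since every node is rebuilt before its subtree absorbs more than $C \cdot \texttt{Init\_Subtree\_Size}$ modifications, each node's current size stays within a constant factor of its creation size, and hence ``square-root branching'' and an $O(\log\log n)$ depth are preserved --- does not hold. The per-node constant-factor size invariant is true, but it does not preserve the square-root relation between a parent and its children: between two rebuilds of a node $v$, one child of $v$ can itself be rebuilt many times (each such rebuild resets that child's creation size and re-idealizes it at its \emph{current} size), so just before $v$'s counter trips, that child's subtree can have grown to a constant fraction of the size of $v$'s subtree rather than roughly its square root. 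Consequently the worst-case height of an IST maintained under this scheme is not $O(\log\log n)$; that bound holds for ideally balanced trees, and only in expectation (under the smoothness assumptions) for trees that have absorbed updates, whereas the theorem's span bound is stated unconditionally. A sanity check: if your depth claim were correct, \texttt{Flatten} would have span $O(\log n \cdot \log\log n)$, strictly better than the $O(\log^3 n)$ the theorem asserts --- the stated bound is itself a hint that the relevant depth is polylogarithmic, not doubly logarithmic.

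The paper closes this gap differently: it invokes the worst-case height bound $O(\log^2 n)$ for ISTs maintained under this rebuilding scheme, cited from Mehlhorn and Tsakalidis~\cite{mehlhorn1993dynamic}, and charges $O(\log n)$ span per level (the \texttt{Scan} over the key sources and the parallel dispatch), giving $O(\log^3 n)$ for \texttt{Flatten}. With that substitution the rest of your argument goes through unchanged: the $O(\log^3 n)$ flatten span dominates the $O(\log^2 n)$ span of \texttt{Merge}/\texttt{Difference} and the $O(\log n \cdot \log\log n)$ span of \texttt{Build} (where your $O(\log\log n)$-level argument is valid, because the output tree is ideal by construction), yielding the claimed bounds for the rebuilding procedure.
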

\begin{proof}[Sketch]
While the work bounds are trivial, we are more interested in span bounds.
From~\cite{mehlhorn1993dynamic} we know that in the worst case, the height of IST with $n$ keys does not exceed $O(\log^2 n)$.
Thus, the flatten operation just goes recursively into $O(\log^2 n)$ levels and spends $O(\log n)$ span at each level. This gives $O(\log^3 n)$ span in total. The construction of an ideal IST has $O(\log \log n)$ recursive levels while each level can be executed in $O(\log n)$ time, i.e., copy the elements into \texttt{Rep} array. This gives us the result.
\end{proof}

This brings us closer to our main complexity theorem.
\begin{theorem}
The work of a batched operation on our parallel-batched IST has the same complexity as if we apply all $m$ operations from this batch sequentially to the original IST of size $n$ (from~\cite{mehlhorn1993dynamic}, the expected execution time is $O(m \log \log n)$). The total span of a batched operation is $O(\log^4 n)$.
\end{theorem}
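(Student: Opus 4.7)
The plan is to bound work and span separately, using the preceding flatten/build/rebuild theorem as a black box and reducing the work analysis to that of $m$ independent sequential operations on the same IST.

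For the work bound, I would argue locally at each visited node. If a batched operation routes $b_v$ keys of the input through node $v$, then interpolation search (Section~\ref{traverse-inner-section}) will locate all $b_v$ keys inside \texttt{v.Rep} in expected $O(b_v)$ work, and the subsequent Scan/Filter/parallel-loop steps that partition the batch among the children cost another $O(b_v)$. At a leaf, the size of \texttt{Rep} is bounded by the constant $H$, so even the Rank-based routine of Section~\ref{leaf-traverse-section} costs $O(b_v)$. Summing $b_v$ across the recursion tree then equals the total number of node-visits of $m$ sequential searches, which by~\cite{mehlhorn1993dynamic} is $O(m \log \log n)$ in expectation. Rebuild costs will be absorbed by the same amortization argument as in the sequential IST: a rebuild of a subtree of size $s$ costs $O(s)$ work and is charged to the $\Theta(s)$ modifications that triggered it.

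For the span bound, I would use the worst-case height bound $O(\log^2 n)$ from~\cite{mehlhorn1993dynamic} and the preceding theorem. Each level of the batched traversal uses only Scan, Filter, parallel loop, Rank, Merge, and Difference, all of whose spans are $O(\log^2 n)$ on arrays of size at most $n$; recursive calls to children run in parallel, so the recursion depth contributes a multiplicative $O(\log^2 n)$, giving $O(\log^4 n)$ for the traversal itself. A rebuild at some $v$ adds $O(\log^3 n)$ span by the preceding theorem, which is dominated by $O(\log^4 n)$.

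The main obstacle, and the only non-routine point, will be the interaction between rebuilds and the recursion. For work, the potential/credit argument of~\cite{mehlhorn1993dynamic} should transfer since a batched update that fires a rebuild at $v$ contributes at least as many credits to $v$ as the sequential variant would, preserving the amortized bound. For span, the key observation will be that along any root-to-leaf path at most one rebuild occurs: triggering a rebuild at $v$ replaces the recursion below $v$, so the $O(\log^3 n)$ rebuild span appears at most once on the critical path and fits inside the total $O(\log^4 n)$.
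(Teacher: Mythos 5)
Your proposal is correct and follows essentially the same route as the paper's own sketch: the span bound via the $O(\log^2 n)$ worst-case height, $O(\log^2 n)$ span per level for the array primitives, and the observation that a rebuild terminates the recursion below (so its $O(\log^3 n)$ span appears only once on the critical path); the work bound by charging node-local costs against the corresponding sequential operations and absorbing rebuilds into the standard amortization. Your work argument is merely more detailed than the paper's one-line "trivial" justification, but the underlying idea is the same.
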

\begin{proof}[Sketch]
The work bound is trivial~--- the only difference with the original IST is that we can rebuild the subtree in advance before applying some of the operations.
Now, we get to the span bounds. From~\cite{mehlhorn1993dynamic}, we know that the height of IST with $n$ keys does not exceed $O(\log^2 n)$. On each level, we spend: 1)~at most $O(\log^2 n)$ span for merge and rank operations; or 2)~we rebuild a subtree at that level and stop. The first part gives us $O(\log^4 n)$ span, while rebuilding takes just $O(\log^3 n)$ span. This leads us to the result of the total $O(\log^4 n)$ span.
\end{proof}

\section{Experiments}
\label{experiments-section}

We have implemented the Parallel Batched IST in C++ using OpenCilk~\cite{blumofe1996cilk} as a framework for fork-join parallelism. 

We tested our parallel-batched IST on three workloads. We initialize the tree with elements from the range $[-10^8; 10^8]$, each taken with probability $1/2$. Thus, the expected size of the tree is $10^8$. Then we execute:

\begin{itemize}
    \item Search for a batch of $10^7$ keys, taken uniformly at random from the range;
    \item Insert a batch of random $10^7$ keys, taken uniformly at random from the range;
    \item Remove a batch of random $10^7$ keys, taken uniformly at random from the range.
\end{itemize}

The experimental results are shown in Fig.~\ref{bench-pic}. The OX axis corresponds to the number of worker processes and the OY axis corresponds to the time required to execute the operation in milliseconds. Each point of the plot is obtained as an average of $10$ runs. We run our code on an Intel Xeon Gold 6230 machine with $16$ cores.

\begin{figure}
  \centering
  \caption{Benchmark results for Parallel-batched Interpolation Search Tree}
  \label{bench-pic}
  \includegraphics[width=0.5\linewidth]{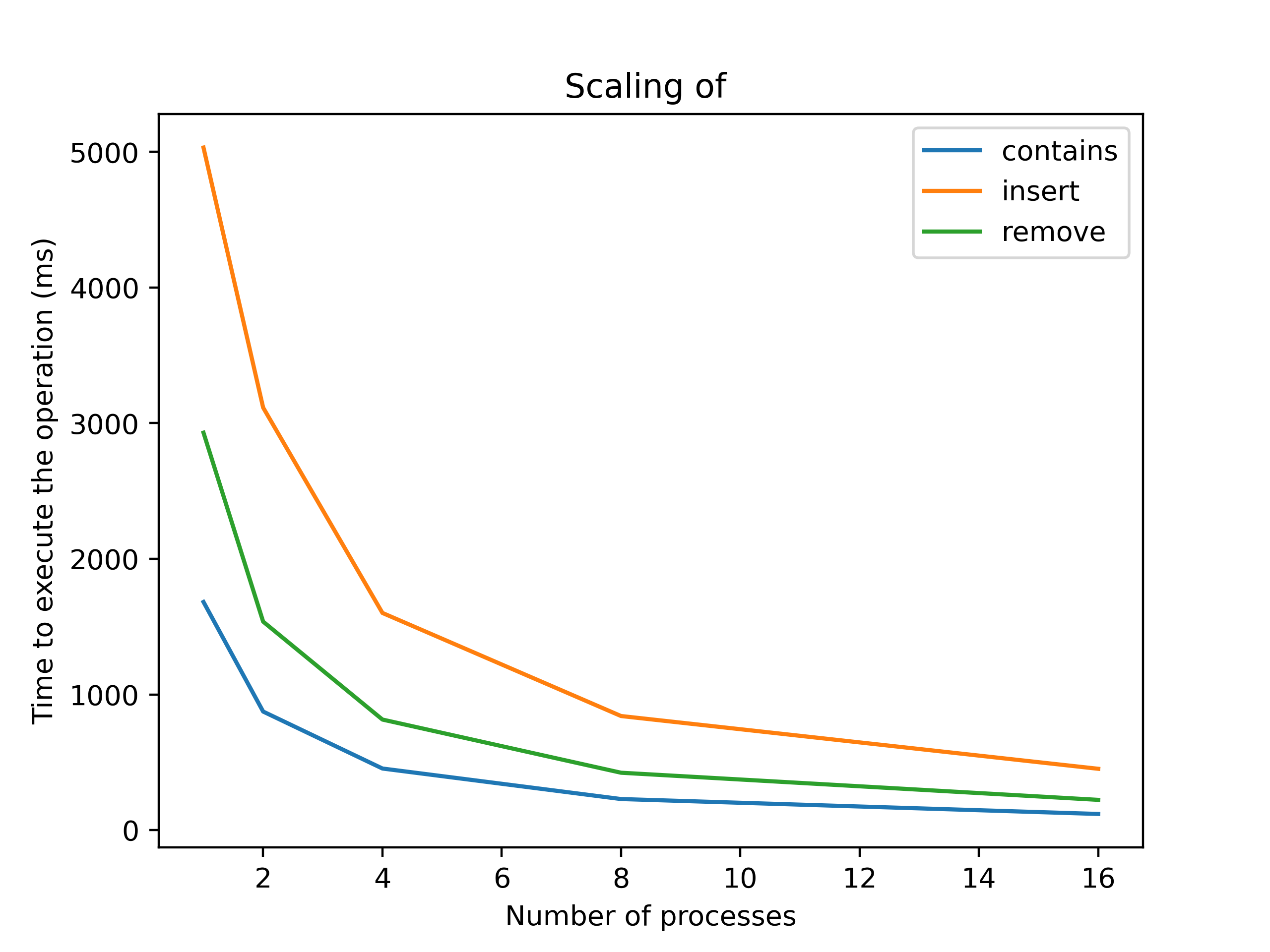}
\end{figure}

As shown in the results, we achieve good scalability. Indeed:

\begin{itemize}
    \item 14x scaling on \texttt{ContainsBatched} operation for 16 processes;
    \item 11x scaling on \texttt{InsertBatched} operation for 16 processes;
    \item 13x scaling on \texttt{RemoveBatched} operation for 16 processes.
\end{itemize}

We also compared our implementation in a sequential mode with \texttt{std::set} implemented via red-back tree. On our machine \texttt{std::set} took 9257 ms to check the existence of  $10^7$ keys in a set with $10^8$ elements while our IST implementation took only 3561 ms using one process. We achieve such speedup by using interpolation search as described in Section~\ref{interpolation-search-section} and by processing high levels of the tree only once for the whole keys batch.

\section{Conclusion}
\label{conclusion-section}

In this work, we presented the first parallel-batched  version of the interpolation search tree that has an optimal work in comparison to the sequential implementation and has a polylogarithmic span. We implemented it and got very promising results. We believe that this work will encourage others to look into parallel-batched data structures based on something more complex than binary search trees.

\bibliographystyle{splncs04}
\bibliography{references.bib}

\end{document}